\documentclass{article}

\usepackage{latexsym,amsmath,amssymb,amsfonts,graphicx,amsthm,mathrsfs}
\usepackage[a4paper,margin=2.5cm,footskip=1cm]{geometry}

\usepackage{float}        
\usepackage{times}        
\usepackage[normalem]{ulem}
\usepackage{url}
\usepackage{braket}
\usepackage{mathtools}
\usepackage{authblk}
\usepackage{import}
\usepackage[usenames,dvipsnames]{color}

\newtheorem{thm}{Theorem}
\newtheorem{lem}[thm]{Lemma}

\newcommand{\q}[1]{| #1 \rangle}

\newcommand{\cl}[1]{e_{#1}}
\newcommand{\clt}[1]{e^\dag_{#1}}

\newcommand{\joint}{{\hat{p}}}
\newcommand{\Hi}{\mathcal{H}}

\renewcommand{\natural}{\mathbb{N}}

\newcommand{\pilim}{\bar{p}}

\newcommand{\Nd}{\mathcal{V}}

\newcommand{\Cd}{\mathcal{C}}
\newcommand{\Gr}{\mathcal{G}}
\newcommand{\X}{\mathcal{X}}

\newcommand{\U}{{\bf U}}
\renewcommand{\S}{{\bf C}}
\newcommand{\Swap}{{\bf S}}
\renewcommand{\P}{{\bf P}}
\newcommand{\Pu}{{\bf P^+}}
\newcommand{\Pd}{{\bf P^-}}
\newcommand{\integer}{\mathbb{Z}}

\renewcommand{\o}[1]{{\bf #1}}
\newcommand{\s}[1]{{\mathcal #1}}
\newcommand{\dos}{\mathcal{D}}
\newcommand{\dens}{\rho}
\newcommand{\pg}[2]{\mathbb{P}_{#2}(#1)}
\newcommand{\tr}{\mathrm{trace}}
\newcommand{\id}{\o{I}}
\newcommand{\qc}{\bf \Gamma}
\newcommand{\gc}{\o{\Psi}}
\newcommand{\p}{p}

\title{Simulation of Quantum Walks and Fast Mixing with Classical Processes}
\author[1]{Simon Apers \thanks{Corresponding author: simon.apers@ugent.be}}
\author[1,2]{Alain Sarlette \thanks{alain.sarlette@inria.fr}}
\author[3]{Francesco Ticozzi \thanks{ticozzi@dei.unipd.it}}
\affil[1]{Department of Electronics and Information Systems, Ghent University, Belgium}
\affil[2]{QUANTIC lab, INRIA Paris, France}
\affil[3]{Dipartimento di Ingegneria dell'Informazione, Universit\`a di Padova, Italy, and the Department of Physics and Astronomy, Dartmouth College, NH 03755, USA.}
\setcounter{Maxaffil}{0}

\begin{document}

\maketitle

\begin{abstract}
We compare discrete-time quantum walks on graphs to their natural classical equivalents, which we argue are lifted Markov chains, that is, classical Markov chains with added memory. We show that these can simulate quantum walks, allowing us to answer an open question on how the graph topology ultimately bounds their mixing performance, and that of any stochastic local evolution. The results highlight that speedups in mixing and transport phenomena are not necessarily diagnostic of quantum effects, although superdiffusive spreading is more prominent with quantum walks.
\end{abstract}

Random walks are both ubiquitous models for natural processes and a powerful, versatile algorithmic tool to explore networks and extract information about their structure.
In recent years their quantum analogue, named quantum walks (QWs), was shown to hold similar promises. QWs describe the evolution of the position probability distribution of a ``walking'' quantum particle on a graph, possibly entangled with other quantum degrees of freedom (the so-called coin). The joint dynamics can be either discrete-time or continuous and must respect the graph locality \cite{aharonov1993quantum,farhi1998quantum,watrous1999quantum,kempe2003quantum}. Following the realization that QWs on a line can beat the diffusive behavior typical of classical stochastic processes \cite{ambainis2001one,aharonov1993quantum}, they have been invoked to explain improved transport phenomena in biological systems \cite{engel2007evidence,mohseni2008environment}, linked to thermodynamic theories, breakdown models and topological states of matter \cite{romanelli2014thermodynamics,oka2005breakdown,kitagawa2012observation}, and simulated in various experiments \cite{karski2009quantum,peruzzo2010quantum,genske2013electric,preiss2015strongly,flurin2017observing}.
Furthermore, they have been intensely studied as a paradigm for quantum computing \cite{childs2009universal,lovett2010universal} and to speed up algorithmic tasks \cite{ambainis2003quantum}, in particular, those related to the celebrated Grover search algorithm \cite{shenvi2003quantum,childs2004spatial,magniez2011search}.

Despite impressive advances in their analysis, elucidating the source and extent of quantum advantages from the perspective of QWs, as well as providing general design principles to ensure a quantum speedup, remain ongoing lines of research. A general quadratic speedup by QWs has been established for the hitting time \cite{childs2003exponential,szegedy2004quantum,kempe2005discrete,magniez2011search,krovi2016quantum,hoyer2016efficient}, thus searching for a marked node in a graph.
The complementary problem of mixing, that is, converging to a particular probability distribution over the nodes, has so far resisted a general QW speedup analysis, although it is closer to the original observation on the line \cite{ambainis2001one}. There is further evidence for a quadratic speedup with respect to classical Markov chains on specific graphs including the cycle \cite{aharonov2001quantum}, the hypercube \cite{moore2002quantum}, and the torus \cite{richter2007quantum}. A general characterization of QW mixing would be a fundamental step for investigating quantum \emph{vs.} classical differences in statistical mechanics (thermodynamic equilibration, transport phenomena, localization defects), and its algorithmic complexity is of key relevance for applications like sampling and Monte-Carlo simulations \cite{sinclair2012algorithms}. 

In this paper, we characterize mixing performance of QWs by showing that they belong to a class of processes which can be simulated by \emph{classical} Markov chains with additional finite memory, called ``lifted Markov chains'' (LMCs) \cite{chen1999lifting}. For general graphs, our constructive proof reminds a classical version of the ``Feynman clock Hamiltonians'' used to prove universality of adiabatic computing \cite{feynman1982simulating,kitaev2002classical,aharonov2008adiabatic}, in combination with ``stochastic bridges'' generalizing \cite{aaronson2005quantum} and \cite{pavon2010discrete,georgiou2015positive} to simulate quantum channels for fixed initial conditions. This allows us to derive a tight bound on potential QW mixing speedup, improving the known bounds from \cite{aharonov2001quantum,temme2010chi}. Furthermore, for lattices, on which most QW mixing speedups have been demonstrated, we relate the QWs to fast mixing LMCs that have not only the same mixing performance, but also the same structure \cite{diaconis2000analysis,diaconis2013spectral}, making them their natural classical analogue.

These results provide several insights. First, an observed speedup in mixing is {\em not} fundamentally diagnostic of a quantum effect, as it may always be explained by a purely classical memory. Second, QWs are essentially subject to the same bound on their mixing performance as other local processes, induced solely by the topology of the graph. Third, the search for a quantum advantage should focus on identifying efficient designs, in terms of the amount of memory or the graph knowledge required. For lattices, beating efficient classical algorithms is possible only for tasks beyond pure mixing. Whether for statistical mechanics, evolutionarily selected biological systems, or design of faster Monte Carlo algorithms, our results significantly narrow the context in which quantum effects may provide an intrinsic advantage.\\

\noindent \emph{QWs and their classical counterparts: a paradigmatic example. --} Usually, QWs are presented as the quantum analogues of, and compared to, classical random walks. We next argue that different classical models should be considered towards establishing an intrinsic quantum advantage in mixing, as QWs exhibit genuine {\em memory effects.} Standard discrete-time QWs \cite{meyer1996quantum,aharonov2001quantum} describe the evolution of the position distribution $p_t$ of a quantum particle (``walker'') over a discrete set of graph nodes $\Nd$. The quantum evolution of position is conditioned on additional degrees of freedom $\Cd$, the coin of the walker. The walker state is thus defined on the joint Hilbert space
$\Hi=\Hi_C \otimes \Hi_V = \text{span}\{{\q{c}\otimes\q{v}}|(c,v)\in \Cd\times \Nd\}.$ The cycle graph is a simple example where QWs provide a mixing speedup with respect to a classical walk, see Fig.~\ref{fig:cycle-walks}. 
To the nodes $\Nd = \{1,2,\dots,N\}$ of the cycle, the QW adds a binary coin $\Cd = \{+,-\}$, see \cite{ambainis2001one,aharonov2001quantum}.
Denoting $\P^\pm$ the cyclic permutation of position, that is, $\P^\pm \q{v} = \q{(v\pm1)\mathrm{mod }N}$ for $v\in {\cal V}$, the unitary QW primitive reads
\[
 \U = \Swap
        \, \left( \S \otimes {\bf I}_N \right)\, , \;\; \S = \begin{bmatrix} e^{-i\phi}\sqrt{1-\alpha} & e^{i\theta}\sqrt{\alpha} \\ -e^{-i\theta}\sqrt{\alpha} & e^{i\phi}\sqrt{1-\alpha} \end{bmatrix},
\]
where $\Swap=\ket{+}\bra{+}\otimes \Pu+\ket{-}\bra{-}\otimes \Pd$ expresses a conditional shift, while $\S$ is a general unitary ``coin toss'' on $\Hi_C$. The conditional motion can also be viewed as spin-orbit coupling. To actually mix, some decoherence or measurement rule must be added to this unitary evolution, see \cite{kendon2007decoherence} for a survey. For instance, after every application of $\U$, one can perform with probability $q$ a projective measurement in the canonical basis, after which the unitary evolution is resumed:
\begin{equation}\label{eq:Uex}
\ket{\psi_{t+1}}
     = \begin{cases} \U\ket{\psi_t} \text{ with probability } 1-q,\\
        \ket{c,v} \text{ with probability } q \, |\braket{c,v|\U|\psi_t}|^2 . \!\! \end{cases}
\end{equation}
A purely unitary QW is obtained with $q=0$, while $q=1$ projects the state on the reference basis at each step. The position distribution $p_t$ is obtained by tracing over the coin and considering the probabilities induced in the node basis at time $t$. The QW of Eq.~\eqref{eq:Uex} with, e.g., parameters $\alpha = 1/2, \phi=\theta=0$ and $q=O(1/N)$, converges towards a uniform $p_t$ in $t=O(N)$ steps, from any initial distribution \cite{ambainis2001one,aharonov2001quantum}. In contrast, a classical random walk over $\Nd$ with transition matrix $\P_0 = (\Pu + \Pd) / 2$ reaches the same distribution only after $O(N^2)$ steps.

\begin{figure}[!t]
\center
\includegraphics[width=80mm]{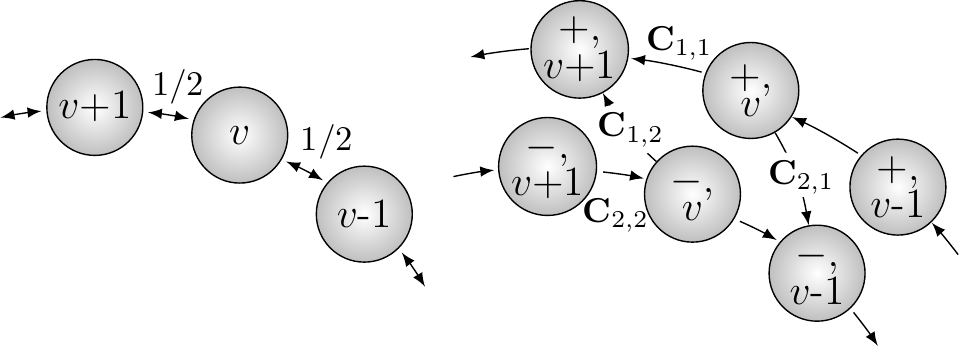}
 \caption{(left) random walk $\P_0$ on the $N$-cycle; (right) quantum walk unitary with coin toss $\S$, lifted Markov chain with a stochastic coin toss $\overline{\S}$, suggesting their comparison.}
 \label{fig:cycle-walks}
\end{figure}

Compared to a classical random walk, the QW above clearly adds memory via the coin degrees of freedom. Yet, QWs can exhibit memory effects even without coin. Consider the two-node graph without coin, $\mathcal{H}={\rm span}\{\ket{1},\ket{2}\},$ equivalent to a qubit, and take the Hadamard gate $U_H=(\sigma_x+\sigma_z)/\sqrt{2}$ as QW primitive. Starting on a given node, after one step, the distribution $p_1$ over $\ket{1},\ket{2}$ is uniform, yet at the second step the initial state is perfectly recovered since $U_H^2$ is the identity operator. This behavior, impossible for any classical Markov process on $\{1,2\}$, is due to the quantum state storing information in its relative phases, or coherences.
Hence, to establish if there is an intrinsic quantum advantage, QWs should be compared to classical local processes with at least a certain amount of additional memory.

Remarkably, a classical walker with memory that mixes fast on the cycle has already been proposed independently of the QW literature \cite{diaconis2000analysis,diaconis2013spectral}, and it shares striking similarities. This walker moves among {\em classical} states in $\Cd\times \Nd.$ 
Its probability distribution {$\joint$ over $\Cd\times \Nd$ evolves as $\joint_{t+1}=\P\, \joint_t,$} with stochastic transition matrix $\P$ having the same structure as $\U$, yet with $\S$ now replaced by a \textit{stochastic} coin toss:
\begin{equation}\label{eq:Pex}
  \P = \Swap
        \, \left( \bar{\S} \otimes {\bf I}_N \right)\, , 
        \quad \bar{\S} = \begin{bmatrix} 1-\alpha & \alpha \\ \alpha & 1-\alpha \end{bmatrix}.
\end{equation}
This can be seen as the mixture of two reversible evolutions: with probability $1-\alpha,$ the state follows the conditional shift $\Swap$; or, with probability $\alpha,$ the coin is switched before applying $\Swap.$ The coin allows the classical walker to retain and use information about its previous motion direction, in physical terms its momentum. The similarity between $\U$ and $\P$ carries a deeper connection, as $\P$ in Eq.~\eqref{eq:Pex} exactly describes the probabilistic evolution induced by Eq.~\eqref{eq:Uex} when starting with $\ket{\psi}=\ket{c,v},$ for some $(c,v) \in \Cd\times \Nd,$ and with $q=1$.

This $\P$ mixes over the cycle in $O(N)$ steps \cite{diaconis2000analysis,diaconis2013spectral}, provided $\alpha=O(1/N)$. This speedup, only due to classical memory, matches the one provided by the QW in Eq.\eqref{eq:Uex} with $q=O(1/N)$. In both cases, an $O(1/N)$ nonunitarity provides a good tradeoff between fast (deterministic) motion along the graph and losing correlation with the initial condition. From these observations, it appears most natural to compare QWs like Eq.\eqref{eq:Uex} to classical evolutions with memory like Eq.\eqref{eq:Pex}, which are formalized as LMCs \cite{chen1999lifting}.\\

\noindent \emph{QWs and LMCs as local processes with equivalent mixing performance. --} Consider a graph with node set $\Nd$ and edges ${\cal E} \subset \Nd \times \Nd$. The nodes could represent energy levels and the edges allowed transitions. The QW and LMC constructions both start by building a lifted graph, where each node of the initial graph is split into ``lifted nodes'' or ``sublevels''. This is done without loss of generality by introducing a coin set ${\Cd},$ defining the lifted nodes $\Cd\times \Nd = \{(c,v)\}$ and selecting lifted edges in $\{(\,(c,v),\,(c',v')\,)~|~(v,v') \in {\cal E} \}$, thus without introducing transitions that were not allowed before lifting.

A general QW is then described by a quantum channel over the space generated by viewing coin and node as quantum numbers, i.e.,
\begin{equation}\label{eq:QWgen}
\rho_{t+1} = \sum_k  {\bf M_k} \rho_t {\bf M_k}^\dagger,
\end{equation}
where $\rho_t$ is a density operator on $\Hi = \mathrm{span}\{\q{c,v}~|~(c,v)\in \Cd\times \Nd\}$ and the $\bf M_k$ satisfy $\bf \sum_k M_k^\dagger M_k = {\bf I}_{\Cd\times\Nd}$, with ${\bf I}$ denoting the identity
  \footnote{Some authors add a so-called Cesaro averaging routine on top of this QW model \cite{aharonov2001quantum,marquezino2008mixing}.
  Our results can explicitly capture this and similar extensions via local stochastic maps, see Supplemental Material in appendix \ref{appendix}.}.
The graph locality is imposed by $\braket{c',v'|{\bf M_k}|c,v} = 0$ if $(v,v') \notin {\cal E}$.
To complete the setup, an initial distribution $p_0$ over $\Nd$ is mapped onto the lifted nodes (or sublevels) by $F: p_0 \mapsto \rho=\sum_{v \in \Nd} p_0(v)\q{c_v, v}\bra{c_v,v},$ thus associating some fixed initial coin state $c_v$ to each $v.$
The object of interest is the distribution $p_t$ over $\Nd$, the main nodes or levels \footnote{
Standard literature like \cite{chen1999lifting} defines LMCs with \emph{joint} distribution over $\Cd \times \Nd$ as object of interest, without initialization map $F$.
This does not affect our QW results, and in fact it implies no significant difference in general, see \cite{apers2017lifting}.}, obtained with the partial trace as $p_t =\textrm{diag}( \mathrm{trace}_\Cd(\rho_t))$.

Similarly, a LMC follows the dynamics
$\; \joint_{t+1} = \P \, \joint_t, \;$
where $\joint_t$ is a vector representing the probability distribution over $\Cd \times \Nd,$ and $\P$ is a stochastic matrix expressing the jump probabilities among sublevels. Namely, denoting by $p=\cl{v}$ and $\joint=\cl{(c,v)}$ the distributions with probability 1 of being on $v$ and on $(c,v),$ respectively, $\clt{(c',v')}\,\P\, \cl{(c,v)}$ is the transition probability from $(c,v)$ to $(c',v')$. Graph locality imposes $\clt{(c',v')}\,\P\,\cl{(c,v)} = 0$ if $(v,v') \notin {\cal E}$. Initial lifted nodes are assigned by $F: p_0 \mapsto \joint_0=\sum_vp_0(v)\cl{(c_v, v)}$. The distribution of interest is obtained by marginalizing over $\Cd$, thus $p_t(v)=\sum_{c \in \Cd} \joint_t(c,v)$ for all $v \in \Nd$.

Clearly, a LMC is a particular QW where populations evolve without coherences, i.e., where $\rho_t$ remains diagonal at all times and ${\bf M_k} = \sqrt{\clt{(c',v')}\,\P\,\cl{(c,v)}}\; \ket{c',v'}\bra{c,v}$, with index $k$ running over all nonzero elements of $\P$. The key to our main result will be to observe how, conversely, any QW can be simulated by some LMC (with possibly higher-dimensional coin). In other words, the non-Markovian evolution of $p_t$ under a QW can be described as a classical Markovian evolution of sublevel populations.

We focus on comparing the mixing behavior induced by QWs and LMCs.
A QW or LMC mixes to some distribution $\pilim$ over $\Nd$ if for any initial state $F(p_0)$ the induced distribution $p_t$ converges to $\pilim$. The mixing time $\tau(\epsilon)$, for any $0<\epsilon<1$, is the time required to get $\epsilon$-close to the limit distribution in total variation distance, i.e., the smallest time such that $\frac{1}{2}\sum_{v \in \Nd}|p_t(v)-\pilim (v)|\leq\epsilon$ for all $t\geq \tau(\epsilon)$ and all $p_0$. A standard ``stabilizing'' requirement for a process that converges to $\pilim$ is that $p_0=\pilim$ should imply $p_t=\pilim$ at all times. This holds automatically for the time-invariant $\P$ considered by the LMC framework. The QW framework allows the $\bf M_k$ to depend on time, but in standard constructions only through the measurement mechanism, like making $q$ time-dependent in the cycle example (see \cite{kendon2007decoherence} for a review). Such QWs too preserve $p_t=\pilim$ at all times when $p_0=\pilim$. We call this property $\pilim$-invariance \footnote{Note that this condition involves both the channel ${\bf M_k}$ and the initialization $F$.} and we will come back to its significance. Our first result shows that the mixing performance of such QW can be closely matched by a LMC.
\begin{thm}
Given a $\pilim$-invariant QW with mixing time $\bar{\tau}(\epsilon_0)$ for some $\epsilon_0\leq 1/4$, we can construct an LMC that has mixing time $\,\tau(\epsilon) \, / \, \bar{\tau}(\epsilon_0) \, \leq  \,\lceil \log(1/\epsilon)\,/\,\log(1/(2\epsilon_0))\rceil\,$ for all $\epsilon>0$.
\end{thm}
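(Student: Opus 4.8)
\emph{Strategy.} The plan is to simulate the QW in repeated \emph{epochs} of length $T=\bar\tau(\epsilon_0)$, with the LMC reproducing the QW position marginal $p_s$ at \emph{every} intermediate step $s$ of each epoch, and then to extract the mixing bound from the contraction of the induced one-epoch map. I would take the lifted nodes to be triples $(v_0,s,v)\in\Nd\times\{0,\dots,T-1\}\times\Nd$: here $v$ is the current node, $s$ is a deterministically advancing \emph{clock} (a classical Feynman-clock register) recording the step within the epoch, and $v_0$ is the \emph{origin} at which the current epoch began. The initialization is $F:p_0\mapsto\sum_{v_0}p_0(v_0)\,\cl{(v_0,0,v_0)}$, synchronizing clock and origin with the node and giving position marginal $p_0$.

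\emph{Local bridges (the crux).} For a fixed origin $v_0$ let $p_s^{(v_0)}$ be the QW position distribution at step $s$ started from $F(\cl{v_0})$. Within an epoch the clock-$s$ transition should be a \emph{local} stochastic ``bridge'' $B_s^{(v_0)}$ carrying $p_s^{(v_0)}$ to $p_{s+1}^{(v_0)}$ while moving mass only along edges. The main obstacle --- and the heart of why a QW admits a classical local simulation --- is to prove such local bridges exist. I would do this via a transportation/Hall feasibility criterion: a local coupling from $\mu$ to $\nu$ exists iff $\nu(S)\le\mu(N(S))$ for every node set $S$, with $N(S)$ the neighborhood in $\Ed$. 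This criterion is met by consecutive QW marginals precisely because the QW channel $\rho_{s+1}=\sum_k\o{M_k}\rho_s\o{M_k}^\dagger$ is local: with $\Pi_S$ the projector onto $S$, locality gives $\Pi_S\o{M_k}=\Pi_S\o{M_k}\Pi_{N(S)}$, and using $\sum_k\o{M_k}^\dagger\o{M_k}=\id$ one gets $p_{s+1}(S)=\tr(\Pi_S\rho_{s+1})\le\tr(\Pi_{N(S)}\rho_s)=p_s(N(S))$. So a single QW step never pushes more probability into $S$ than its neighborhood held --- exactly the slack a local classical transport needs. Because the clock supplies $s$, step-dependent $\o{M_k}$ arising from the measurement schedule are handled automatically.

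\emph{Assembling the LMC.} I would then set $\P$ to apply $B_s^{(v_0)}$ to $v$ and advance $s\mapsto s+1$ while keeping $v_0$ for $s<T-1$, and for $s=T-1$ to apply $B_{T-1}^{(v_0)}$ and reset, i.e.\ $(v_0,T-1,v)\mapsto(v',0,v')$ with probability $B_{T-1}^{(v_0)}(v'\mid v)$. Every move changes the node only along an edge (the reset touches only the bookkeeping registers), so $\P$ is a bona fide time-invariant local stochastic matrix. By linearity and the bridge property, running $s$ steps of any epoch from a node distribution $q$ yields position marginal $E_s q:=\sum_{v_0}q(v_0)\,p_s^{(v_0)}$, the QW's own $s$-step position map; one full epoch realizes the stochastic matrix $E:=E_T$ on $\Nd$, and after $k$ epochs the state is $F(E^kp_0)$, so epochs simply iterate $E$.

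\emph{Mixing bound.} Finally I would quantify the contraction of $E$. The $\pilim$-invariance of the QW gives $E_s\pilim=\pilim$ for all $s$, so $\pilim$ is a fixed point of $E$; and since $\bar\tau(\epsilon_0)$ bounds QW mixing from \emph{every} starting node, $\|E\cl{v_0}-\pilim\|_{\mathrm{TV}}\le\epsilon_0$, whence the Dobrushin coefficient satisfies $\delta(E)=\max_{i,j}\|E\cl{i}-E\cl{j}\|_{\mathrm{TV}}\le2\epsilon_0\le\tfrac12$. Submultiplicativity gives $\|E^kp_0-\pilim\|_{\mathrm{TV}}\le(2\epsilon_0)^k$, and at an intermediate time $t=kT+s$ the marginal $E_sE^kp_0$ still obeys $\|E_sE^kp_0-\pilim\|_{\mathrm{TV}}\le\delta(E_s)\,(2\epsilon_0)^k\le(2\epsilon_0)^k$ since $\delta(E_s)\le1$. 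Thus $\|p_t-\pilim\|_{\mathrm{TV}}\le\epsilon$ for all $t\ge kT$ as soon as $(2\epsilon_0)^k\le\epsilon$, i.e.\ $k\ge\log(1/\epsilon)/\log(1/(2\epsilon_0))$, yielding $\tau(\epsilon)\le\lceil\log(1/\epsilon)/\log(1/(2\epsilon_0))\rceil\,\bar\tau(\epsilon_0)$. The hypothesis $\epsilon_0\le1/4$ is exactly what forces $\delta(E)\le\tfrac12<1$, making the denominator positive and the contraction real.
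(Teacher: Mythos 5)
Your proposal is correct and follows essentially the same route as the paper: step-by-step local ``stochastic bridges'' whose existence is guaranteed by a max-flow/transportation feasibility criterion applied to the locality inequality $p_{s+1}(S)\le p_s(S)+p_s(\s{B}(S))$, a Feynman-clock lift over $(\Nd\times\{0,\dots,T\})\times\Nd$ with an end-of-epoch reset, and amplification via the $\le 2\epsilon_0$ total-variation contraction of the one-epoch map. The only differences are cosmetic (you fold the reset into the last bridge step and phrase the contraction via the Dobrushin coefficient rather than the paper's explicit submultiplicativity bound).
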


Mathematical details for all our results are available in the Supplemental Material, appendix \ref{appendix}.
The main idea in proving Thm.1 is to simulate the QW over the time interval $[0,\bar{\tau}(\epsilon_0)]$ using a LMC.
Indeed, as shown for unitary evolutions in \cite{aaronson2005quantum}, the probability distribution in the \emph{fixed measurement basis} associated to the nodes is not subject to the no-go results for general local hidden variables theories.
We extend this result to $p_t$ induced by an arbitrary QW that starts from a given node $v \in \Nd$. Following $p_t$ step by step, one builds a sequence of stochastic matrices $\P_1^{(v)},\P_2^{(v)},...,\P_t^{(v)}$ acting on $\Nd$ only, satisfying graph locality, and such that $p_t = \P_t^{(v)} p_{t-1}$ when starting on $v$.
The max-flow min-cut theorem from graph theory ensures that such construction always exists.
It can be traced back to a property that holds for LMCs, QWs, and more general local stochastic processes independently of the underlying physical mechanism: a node cannot contain more population at time $t+1$, than the population at time $t$ on itself and on its neighbors \cite{aharonov2001quantum}.
We thus simulate the QW with a classical process whose jump probabilities depend on time and on the starting node $v$.
To obtain a simulation with a (time-independent) LMC, at least for finite time horizon $t \leq \bar{\tau}(\epsilon_0)$, we encode these dependencies into the coin.
This follows the same spirit as adding registers in the clock Hamiltonians by Feynman and Kitaev \cite{feynman1982simulating,kitaev2002classical}.
Explicitly, we let current time $l$ and initial node $v$ act as a coin degree of freedom $c'=(v,l)$, which conditionally selects the proper transition matrix $\P_l^{(v)}$, see Fig.~\ref{fig:bridge}.
The resulting LMC describes a distribution over 
$\Cd' \times \Nd \equiv \big( \Nd \times \{0,1,...,T=\bar{\tau}(\epsilon_0)\}\big) \times \Nd \; ,$
with associated stochastic transition matrix
\[ \P \equiv \sum_{v \in \Nd} \cl{v}\clt{v} \otimes 
   \left( \sum_{t=0}^{T-1} \cl{t\text{+1}}\clt{t} \otimes \P_t^{(v)} + \cl{T}\clt{T} \otimes {\bf I}_\Nd \right)  \; \]
and initial assignment $F : \cl{v} \mapsto (\cl{v} \otimes \cl{0}) \otimes \cl{v}$. Finally, we apply an amplification technique that is exploited in randomized algorithms: the action of $\P$ on $\cl{T}$ is modified to have $\P \, \cl{(v_0,T,v)} = \cl{(v,0,v)} = F(v)$, so that the $T$ first steps are repeated iteratively. Thanks to $\pilim$-invariance of the QW that was used to generate $\P$, the resulting LMC will contract towards $\pilim$ at the announced exponential rate for all $t \geq T$.\\

\begin{figure}[!t]
 \center
  \includegraphics[width=88mm,trim=3.9cm 5cm 5cm 8cm,clip=true]{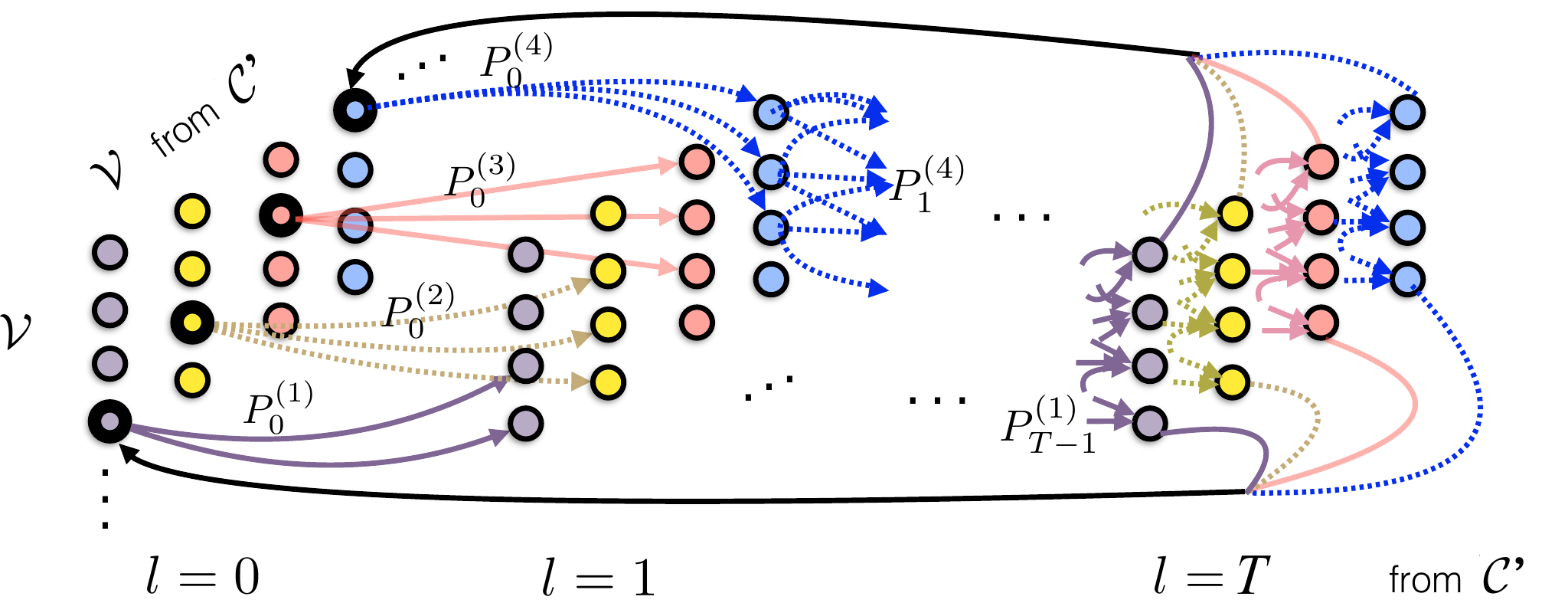}
\caption{Sketch of the LMC construction proving Thm.1, and implying that QWs feature the same conductance bound on fast mixing as LMCs. A given graph with nodes $\Nd$ (vertical axis) is lifted with coin space $\Cd'$ comprising both an initial node index (depth) and a time index (horizontal). LMC transitions are constructed between nodes of this lifted graph. (edges shown partially to avoid clutter)}\label{fig:bridge}
\end{figure}

Beyond the comparison with LMCs, this construction implies a general bound on the mixing performance of $\pilim$-invariant QWs. This tightens and generalizes the bounds of \cite{aharonov2001quantum,temme2010chi}, which are restricted to generating uniform $\pilim$ with unital quantum channels. The bound involves a function of graph topology and target distribution only, meant to capture the bottlenecks that slow down mixing, called the graph conductance $\Phi_{\pilim}$. Specifically, partitioning $\Nd$ into two subsets $\X$ and $\X^c$, consider that all the stationary population on $\X^c$ is lost; the conductance counts which fraction of the remaining population $\pilim(\X)=\sum_{v\in\X} \pilim(v)$ jumps back to $\X^c$ in one step. More precisely, if $\P$ on $\Nd$ has a stationary distribution $\pilim$, then
      \[ \Phi_{\pilim}(\P) =  \min_{\X: 0<\pilim(\X) \leq \frac{1}{2}} \, \Big( \, \sum_{v\in \X,v'\in \X^c} (\clt{v'} \P \cl{v})\pilim(v) \, \Big) \, / \, \pilim(\X) \; .
      \]
The maximal $\Phi_{\pilim}(\P)$ over all Markov chains that keep $\pilim$ invariant on a given graph, is the graph conductance $\Phi_{\pilim}$.

The estimate $1/\Phi_{\pilim}$ is a well-known lower bound on the mixing time of any classical Markov chain, and it carries over to the convergence of $\joint_t$ in associated LMCs \cite{chen1999lifting}. Conversely, \cite{chen1999lifting} establishes a construction of LMCs that essentially saturate this bound; it however requires to solve a hard multi-commodity flow problem over the entire graph. A novel observation, obtained essentially by fully exploiting the triangle inequality while computing the marginal probabilities, is that the bound keeps holding when taking the marginal $p_t$ over sublevels (i.e.,~over coin values) of a $\pilim$-invariant LMC. Combining this with Theorem 1 provides a tight bound for the ultimately achievable mixing time of QWs.
\begin{thm} Any $\pilim$-invariant QW has a mixing time $\tau(1/4) \geq 1/(4\Phi_{\pilim})$, and there exists such a QW that has a mixing time
  $\tau(\epsilon) \leq O(\,\log(1/\min_k\pilim_k) \log(1/\epsilon){\; /\; \Phi_{\pilim}}\,)$ for all $\epsilon>0$.
\end{thm}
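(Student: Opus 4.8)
The plan is to prove the two inequalities by separate arguments: the lower bound $\tau(1/4)\ge 1/(4\Phi_{\pilim})$ through a conductance/flow estimate valid for \emph{every} $\pilim$-invariant local process, and the upper bound by exhibiting one explicit fast QW, namely a lifted Markov chain. The lower bound is the heart of the statement; the upper bound is essentially a citation once one recognises that LMCs sit inside the QW class.

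\emph{Lower bound.} I would fix a subset $\X\subseteq\Nd$ with $0<\pilim(\X)\le 1/2$ and start the QW from $p_0=\pilim|_\X/\pilim(\X)$, the target conditioned on $\X$, sent through $F$. Because $F$ produces a \emph{diagonal} $\rho_0$ and $p_0\le\pilim/\pilim(\X)$ entrywise, one has the semidefinite domination $\rho_0\preceq\rho_0^{\pilim}/\pilim(\X)$ against $\rho_0^\pilim=F(\pilim)$. Complete positivity of the channel preserves this order, so $\rho_t\preceq\rho_t^{\pilim}/\pilim(\X)$ for all $t$, and since $\pilim$-invariance forces the $\pilim$-started marginal to be constantly $\pilim$, tracing out the coin yields the pointwise bound $p_t(v)\le\pilim(v)/\pilim(\X)$. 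The key estimate I then aim for is that the population crossing into $\X^c$ per step is at most $\Phi_{\pilim}$: graph locality of the $M_k$ confines the one-step marginal transfer to edges of $\Nd$, so by max-flow–min-cut it is realised by a local stochastic matrix $\P$; arranging $\P$ to also fix $\pilim$ makes it a genuine $\pilim$-invariant chain, and choosing $\X$ as its bottleneck cut gives a crossing $\sum_{v\in\X,v'\in\X^c}(\clt{v'}\P\cl v)\,p_t(v)\le\pilim(\X)^{-1}\sum_{v\in\X,v'\in\X^c}(\clt{v'}\P\cl v)\pilim(v)\le\Phi_{\pilim}$, using the domination and the definition of $\Phi_{\pilim}$ as the optimal cut conductance over all such chains. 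Since $p_0(\X^c)=0$ this accumulates to $p_t(\X^c)\le t\,\Phi_{\pilim}$, whence $\|p_t-\pilim\|_{\mathrm{TV}}\ge\pilim(\X^c)-p_t(\X^c)\ge 1/2-t\,\Phi_{\pilim}$, exceeding $1/4$ for every $t<1/(4\Phi_{\pilim})$. For a genuine LMC the same computation runs with $\rho$ diagonal throughout, which is the marginalised form of the classical conductance bound of \cite{chen1999lifting}; this simultaneously tightens and generalises \cite{aharonov2001quantum,temme2010chi}.

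\emph{Upper bound.} Here I would not build a new process but invoke the saturating construction of \cite{chen1999lifting}: on any graph there is a lifted Markov chain with node-marginal $\pilim$ whose mixing time to constant total-variation distance is $O(\log(1/\min_k\pilim_k)/\Phi_{\pilim})$. Since an LMC is exactly a QW whose density operator stays diagonal, this chain is already a $\pilim$-invariant QW, so the existence claim is immediate. To pass from constant accuracy to arbitrary $\epsilon$ I would append the amplification step used for Thm.~1, repeating the mixing phase so that $p_t$ contracts towards $\pilim$ at a geometric rate; this multiplies the time by $O(\log(1/\epsilon))$ and yields $\tau(\epsilon)\le O(\log(1/\min_k\pilim_k)\,\log(1/\epsilon)\,/\,\Phi_{\pilim})$. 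The discrepancy between the joint-distribution convention of \cite{chen1999lifting} and the initialisation map $F$ is immaterial, as already noted in the text.

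The main obstacle is the lower bound for \emph{genuine} QWs, where $p_t$ is non-Markovian and $\rho_t$ carries coherences between $\X$ and $\X^c$; a priori these off-diagonal terms could let probability cross the bottleneck faster than any classical chain, and a single node-dephased ``lumped'' chain does not reproduce the marginal exactly. The crux is to show they cannot, recovering the classical constant with no loss — which is precisely why the argument cannot be routed through Thm.~1, as that would forfeit a logarithmic factor. I expect the clean resolution to rest on exactly the ingredients above: the domination $\rho_t\preceq\rho_t^{\pilim}/\pilim(\X)$ from complete positivity together with $\pilim$-invariance, the positivity of $\sum_k M_k^\dagger\Pi_{\X^c}M_k$, the max-flow–min-cut realisation of the net transfer by a $\pilim$-invariant local chain (with the time-dependence of this effective chain absorbed by passing to its time-average before selecting the bottleneck cut), and a careful triangle-inequality bookkeeping when marginalising over the coin, so that the cut-crossing is controlled by $\Phi_{\pilim}$.
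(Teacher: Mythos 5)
Your upper bound is essentially the paper's: cite the saturating lifted-chain construction of \cite{chen1999lifting}, observe that an LMC is a QW with diagonal $\rho_t$, and amplify to get the $\log(1/\epsilon)$ factor. The lower bound, however, is where your proposal diverges, and it contains two problems. First, your stated reason for refusing to route the argument through Thm.~1 --- that it ``would forfeit a logarithmic factor'' --- is based on a misreading: the first part of the Thm.~1 construction (before amplification) reproduces the marginal $p_t$ of the QW \emph{exactly} for all $t\le\bar\tau(\epsilon_0)$, so with $\epsilon_0=1/4$ the LMC has $1/4$-mixing time equal to that of the QW, with no loss whatsoever. The logarithmic factor only enters the amplification phase used for $\epsilon<\epsilon_0$, which is irrelevant to a lower bound at $\epsilon=1/4$. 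The paper's proof of the lower bound is precisely this composition: simulate the QW by an LMC (Thm.~1, first part), then apply a conductance lower bound to the \emph{marginal} mixing of that LMC (Lemma~\ref{lem:lift-mixing}).

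Second, your direct argument has a genuine gap at its crux. The Loewner domination $\rho_t\preceq\rho_t^{\pilim}/\pilim(\X)$ under a CP map is fine and does give $p_t(v)\le\pilim(v)/\pilim(\X)$, but the step bounding the per-step crossing by $\Phi_{\pilim}$ does not go through as written. The matrices produced by max-flow--min-cut (Lemma~\ref{lem:sim}) are local but are \emph{not} $\pilim$-invariant, so they do not belong to the feasible set over which $\Phi_{\pilim}$ is defined as a maximum; ``arranging $\P$ to also fix $\pilim$'' is exactly the missing ingredient, and neither time-averaging the bridges nor anything else in your sketch supplies it. Moreover, to conclude $\Phi_{\X}(\P)\le\Phi_{\pilim}$ you must take $\X$ to be the bottleneck cut of $\P$, yet your $\P$ depends on $t$ and on the initial distribution $\pilim|_{\X}/\pilim(\X)$, which is itself defined in terms of $\X$ --- a circularity. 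The paper resolves all of this by assembling the bridges into a single time-invariant lifted chain $\o{P}$ with stationary distribution $\hat\pilim$, passing to its induced (Aldous--Fill) chain $\o{P}_{\s{V}}$ on $\Nd$, which is local and satisfies $\o{P}_{\s{V}}\pilim=\pilim$ \emph{by construction} and hence has $\Phi(\o{P}_{\s{V}})\le\Phi_{\pilim}$; one then starts from $\hat\pilim$ conditioned on $\s{C}\times\X$ for the bottleneck cut $\X$ of $\o{P}_{\s{V}}$, applies the escape estimate of Lemma~\ref{lemma:cut}, and transfers it to the marginal over coins via the triangle inequality. Without an analogue of the induced-chain step, your per-step bound of $\Phi_{\pilim}$ is unsubstantiated.
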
\vspace{2mm}
Besides mixing, the LMC construction has relevance for other tasks, enabling for instance to effectively simulate quantum transport with finite classical resources.

\noindent \emph{On efficient design of fast mixing QWs and LMCs. --} Fast mixing LMCs can often be built significantly more simply than with the general construction of Thm.1, by mirroring the structure of corresponding QWs. Accelerated mixing with QWs has been mostly demonstrated for graphs with strong symmetries, more specifically lattices \cite{ambainis2001one,aharonov2001quantum,moore2002quantum,richter2007quantum,marquezino2008mixing,marquezino2010mixing}. Similarly to the QW on the circle above, these examples use coin values to encode the lattice generators among which the walker can select its next move. 

Remarkably, the same structure is found in a proposal for designing fast mixing LMCs \cite{diaconis2013spectral}. For a $d$-dimensional square lattice of size $M$, the coin features $2d$ values of type $\pm_{k}$, with $k \in \{1,2,...,d\}$ indicating the axis and $\pm$ the direction of conditional motion among the nodes. At each step, the coin has a probability $\alpha = 1/(2dM)$ to switch to each of the other coin values, thus retaining a high probability $1-(2d-1)/(2dM)$ to stay with the same generator.
This dynamics precisely corresponds to a QW with diagonally dominant coin update $\bf C$ that is projectively measured at each step, as in Eq.\eqref{eq:Uex} with $q=1$. For fixed dimension $d$, it also provides the same order of speedup as a QW with $q \ll 1$ \cite{richter2007quantum}, and as the best possible QW according to Theorem 2, namely linear in $M$. Indeed, by counting the probabilities of applying, to each lattice dimension consecutively, the sequence of steps that lead to fast mixing on the cycle, one obtains the following (possibly loose) bound \footnote{This bound was conjectured to hold more generally for all Abelian Cayley graphs in \cite{diaconis2013spectral}, and the authors provide a concrete proof only for the case of the circle.}.
\begin{thm}
The just described LMC on $\integer_N^d$ has a mixing time
   $\; \tau(\epsilon) 
        \leq O(M \; d^2\log(d)\log(1/\epsilon)) \; .
    $
\end{thm}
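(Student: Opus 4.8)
The plan is to reduce the $d$-dimensional analysis to the one-dimensional (cycle) result, which I treat as established \cite{diaconis2000analysis,diaconis2013spectral}, and then to pay an explicit, slightly wasteful price for combining the $d$ coordinates. The starting observation is structural: coordinate $k$ of the walker moves only while the coin sits on axis $k$, i.e.\ takes a value in $\{+_k,-_k\}$, and otherwise it is frozen. Moreover, because the $2d-2$ off-axis coin states play symmetric roles with respect to axis $k$, the projection of the coin chain onto the three lumped states $\{+_k,-_k,\text{off}\}$ is itself Markovian. First I would make this lumping precise, so that the pair $(x_k,\text{axis-}k\text{ coin status})$ evolves as an autonomous chain: an effective lifted walk on the cycle $\integer_M$ augmented by a waiting (``off'') state.

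Next I would identify the effective parameters of this one-coordinate chain. While on axis $k$ the direct reversal $+_k\leftrightarrow-_k$ occurs with probability $\alpha=1/(2dM)$, but the dominant source of direction re-randomisation is leaving axis $k$ (probability $\Theta(1/M)$ per step) and later returning with a freshly drawn sign. Measured in \emph{active} steps (those in which the coin is on axis $k$), this yields an effective reversal rate $\Theta(1/M)$, exactly the fast-mixing regime of the cycle of circumference $M$. Invoking the cycle result, coordinate $k$ then reaches $\epsilon'$-uniformity after $O(M\log(1/\epsilon'))$ active steps. Since the coin spends only a $1/d$ fraction of its time on axis $k$, accumulating this many active steps costs $O(dM\log(1/\epsilon'))$ real steps, which is the first factor of $d$.

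I would then chain the coordinates consecutively, devoting one time block of length $O(dM\log(1/\epsilon'))$ to each axis $k=1,\dots,d$, using the key fact that a coordinate which has already reached the uniform distribution stays uniform: the remaining dynamics only ever shifts it (which preserves uniformity) or freezes it. Combining the $d$ coordinates in total variation by a union bound forces the per-coordinate target $\epsilon'=\epsilon/d$, whence $\log(1/\epsilon')=\log(d/\epsilon)$; multiplying the per-block cost by the $d$ blocks gives $O(d^2M\log(d/\epsilon))=O(M\,d^2\log(d)\log(1/\epsilon))$, the claimed (deliberately loose) bound.

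The main obstacle is that all $d$ coordinates are driven by a \emph{single} coin trajectory, so their displacements are correlated and closeness of each marginal to uniform does not by itself imply closeness of the \emph{joint} law to the product uniform on $\integer_M^d$. I would resolve this exactly where the looseness enters: the lumping above shows that, during its block, coordinate $k$ evolves as an autonomous chain whose transitions do not depend on the positions of the other coordinates, so its convergence to uniform holds \emph{conditionally} on the remaining configuration; a conditionally uniform coordinate is independent of the rest, and iterating over blocks propagates independence to the full joint law. Making this conditioning rigorous --- in particular checking that the block schedule is realised with adequate probability and that the autonomy survives the coupling between on-axis and off-axis time --- is the delicate part, and serialising the coordinates rather than mixing them in parallel (which would save a factor $d$) is precisely the concession that produces the $d^2$ in place of $d$.
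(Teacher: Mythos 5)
Your reduction to the cycle and your accounting of the $d^2\log d$ factor both track the intended result, and you correctly identify the crux: all $d$ coordinates are driven by one coin trajectory, so per-coordinate marginal mixing does not give closeness of the \emph{joint} law to uniform on $\integer_M^d$. But the step you propose to close this gap is a non sequitur. Autonomy of the lumped pair $(x_k,\ \text{axis-}k\text{ status})$ as a marginal Markov chain does \emph{not} imply that $x_k$ mixes \emph{conditionally on the other coordinates}: the other coordinates are deterministic functions of the same coin trajectory, so conditioning on their paths conditions on (most of) the driving noise of your autonomous chain --- indeed, conditioning on the full coin trajectory makes $x_k$ deterministic. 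Concretely, given the other coordinates' trajectories you know exactly at which times the coin sat on axis $k$; the only residual randomness in $x_k$ is the signs of those axis-$k$ excursions, and showing that these signs alone randomize $x_k \bmod M$ is precisely the work that remains. The cycle result you invoke is a statement about the unconditional marginal of a different (two-state-coin) chain and does not supply this conditional statement, so ``a conditionally uniform coordinate is independent of the rest'' is asserted for a conditional uniformity you have not established. (If instead you read ``remaining configuration'' as the others' positions at the \emph{start} of a block, the conditioning is harmless but the conclusion --- independence in the \emph{final} joint law --- no longer follows.)

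The paper sidesteps marginals entirely and proves a pointwise Doeblin minorization on the joint distribution: in a window of $2M$ steps, the explicit event ``exactly one coin toss, flipping $+_k$ to $-_k$'' has probability $\geq 1/(16d)$ and, conditional on it, the toss time is uniform, so $x_k$ lands uniformly on $\{1,\dots,M\}$ ($M$ odd) \emph{while the other coordinates do not move at all}. Chaining such events over a coupon-collector number of $3M$-blocks yields $p_T \geq q\,\pilim$ with constant $q$, after which $\|p_{kT}-\pilim\|_{TV}\leq(1-q)^k$ gives the $\log(1/\epsilon)$ factor. This is exactly the residual-sign argument your approach is missing, packaged so that independence across coordinates is automatic (the good event for coordinate $k$ freezes the others). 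A secondary issue in your write-up is that the effective one-coordinate chain (three lumped coin states with a waiting state and re-randomization upon return, expected off-axis sojourn $\Theta(dM)$) is not the chain for which the cited cycle bound is proved, so even the per-coordinate step would need a fresh argument rather than a citation.
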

\noindent Thus, QRW and LMC have the same order of mixing time; the same structure; and they require the same graph knowledge for tuning ($\alpha$ and/or $q$), namely the time $O(M)$ at which mixing will be considered accomplished.\\

In summary, we clarify that QWs on a graph induce non-Markovian local processes whose mixing behavior can be simulated by LMCs (Thm.1), and that this has several implications for searching a quantum speedup in mixing processes. The construction of Thm.1 can in fact be extended to abstract local stochastic dynamics (see Supplemental Material in appendix \ref{appendix}) beyond the QW model. As a consequence, the hierarchy LMCs $\subseteq$ QWs $\subseteq$ \{general local processes\} collapses regarding mixing speed, not only in terms of ultimate speedup achievable on general graphs (Thm.2), but also in terms of paradigmatic cases for which efficient mixing designs are known (lattices, Thm.3). In this light, a mixing speedup with respect to Markov chains is not diagnostic of underlying quantum dynamics, but potentially just of a memory effect. This prompts the question whether there is room for a ``quantum advantage'' at all in QW mixing. 

Besides establishing that there is no advantage in terms of best achievable mixing time, our analysis also suggests why this is not the end of the story. While the property of $\pilim$-invariance holds and stabilizes the system in typical QW proposals, it does not hold in some mixing-related applications, like simulated annealing. This distinction may be important as, without $\pilim$-invariance, the conductance bound of Thm.2 could be broken significantly \cite{apers2017lifting}.
As another memory-related aspect, in Eq.~\eqref{eq:Uex} on the cycle, taking $\alpha=1/2$ leads to fast QWs, while the corresponding ``projectively measured'' LMC boils down to the quadratically slower standard random walk. This shows that coherences can play a beneficial role, and could guide future research towards designing simple yet fast mixing QWs on graphs for which, unlike on lattices, LMCs of simple design do not meet the conductance bound yet. Furthermore, the QW of Eq.~\eqref{eq:Uex}, taking $\alpha=1/2$ and $q=1/N$, turns out to efficiently mix over the $t$ nodes closest to its starting node, \emph{for any number of iterations $t<N$} \cite{ambainis2001one}. Such multiscale mixing cannot be achieved with the LMC of Eq.~\eqref{eq:Pex}, where tuning $\alpha = 1/N$ to have good mixing at $t=N$ implies almost deterministic motion for $t\ll N$. This feature could point to efficient quantum algorithms addressing tasks related to mixing, yet not directly reducible to it.

The authors want to thank Giuseppe Vallone and Lorenza Viola for valuable suggestions and comments on earlier versions of the manuscript.

\bibliographystyle{ieeetr}
\bibliography{biblio}

\begin{thebibliography}{10}

\bibitem{aharonov1993quantum}
Y.~Aharonov, L.~Davidovich, and N.~Zagury, ``Quantum random walks,'' {\em
  Physical Review A}, vol.~48, no.~2, p.~1687, 1993.

\bibitem{farhi1998quantum}
E.~Farhi and S.~Gutmann, ``Quantum computation and decision trees,'' {\em
  Physical Review A}, vol.~58, no.~2, p.~915, 1998.

\bibitem{watrous1999quantum}
J.~Watrous, ``Quantum simulations of classical random walks and undirected
  graph connectivity,'' in {\em Computational Complexity, 1999. Proceedings.
  Fourteenth Annual IEEE Conference on}, pp.~180--187, IEEE, 1999.

\bibitem{kempe2003quantum}
J.~Kempe, ``Quantum random walks: an introductory overview,'' {\em Contemporary
  Physics}, vol.~44, no.~4, pp.~307--327, 2003.

\bibitem{ambainis2001one}
A.~Ambainis, E.~Bach, A.~Nayak, A.~Vishwanath, and J.~Watrous,
  ``One-dimensional quantum walks,'' in {\em Proceedings of the thirty-third
  annual ACM symposium on Theory of computing}, pp.~37--49, ACM, 2001.

\bibitem{engel2007evidence}
G.~S. Engel, T.~R. Calhoun, E.~L. Read, T.-K. Ahn, T.~Man{\v{c}}al, Y.-C.
  Cheng, R.~E. Blankenship, and G.~R. Fleming, ``Evidence for wavelike energy
  transfer through quantum coherence in photosynthetic systems,'' {\em Nature},
  vol.~446, no.~7137, pp.~782--786, 2007.

\bibitem{mohseni2008environment}
M.~Mohseni, P.~Rebentrost, S.~Lloyd, and A.~Aspuru-Guzik,
  ``Environment-assisted quantum walks in photosynthetic energy transfer,''
  {\em The Journal of chemical physics}, vol.~129, no.~17, p.~11B603, 2008.

\bibitem{romanelli2014thermodynamics}
A.~Romanelli, R.~Donangelo, R.~Portugal, and F.~de~Lima~Marquezino,
  ``Thermodynamics of n-dimensional quantum walks,'' {\em Physical Review A},
  vol.~90, no.~2, p.~022329, 2014.

\bibitem{oka2005breakdown}
T.~Oka, N.~Konno, R.~Arita, and H.~Aoki, ``Breakdown of an electric-field
  driven system: a mapping to a quantum walk,'' {\em Physical review letters},
  vol.~94, no.~10, p.~100602, 2005.

\bibitem{kitagawa2012observation}
T.~Kitagawa, M.~A. Broome, A.~Fedrizzi, M.~S. Rudner, E.~Berg, I.~Kassal,
  A.~Aspuru-Guzik, E.~Demler, and A.~G. White, ``Observation of topologically
  protected bound states in photonic quantum walks,'' {\em Nature
  communications}, vol.~3, p.~ncomms1872, 2012.

\bibitem{karski2009quantum}
M.~Karski, L.~F{\"o}rster, J.-M. Choi, A.~Steffen, W.~Alt, D.~Meschede, and
  A.~Widera, ``Quantum walk in position space with single optically trapped
  atoms,'' {\em Science}, vol.~325, no.~5937, pp.~174--177, 2009.

\bibitem{peruzzo2010quantum}
A.~Peruzzo, M.~Lobino, J.~C. Matthews, N.~Matsuda, A.~Politi, K.~Poulios, X.-Q.
  Zhou, Y.~Lahini, N.~Ismail, K.~W{\"o}rhoff, {\em et~al.}, ``Quantum walks of
  correlated photons,'' {\em Science}, vol.~329, no.~5998, pp.~1500--1503,
  2010.

\bibitem{genske2013electric}
M.~Genske, W.~Alt, A.~Steffen, A.~H. Werner, R.~F. Werner, D.~Meschede, and
  A.~Alberti, ``Electric quantum walks with individual atoms,'' {\em Physical
  review letters}, vol.~110, no.~19, p.~190601, 2013.

\bibitem{preiss2015strongly}
P.~M. Preiss, R.~Ma, M.~E. Tai, A.~Lukin, M.~Rispoli, P.~Zupancic, Y.~Lahini,
  R.~Islam, and M.~Greiner, ``Strongly correlated quantum walks in optical
  lattices,'' {\em Science}, vol.~347, no.~6227, pp.~1229--1233, 2015.

\bibitem{flurin2017observing}
E.~Flurin, V.~Ramasesh, S.~Hacohen-Gourgy, L.~Martin, N.~Yao, and I.~Siddiqi,
  ``Observing topological invariants using quantum walks in superconducting
  circuits,'' {\em Physical Review X}, vol.~7, no.~3, p.~031023, 2017.

\bibitem{childs2009universal}
A.~M. Childs, ``Universal computation by quantum walk,'' {\em Physical review
  letters}, vol.~102, no.~18, p.~180501, 2009.

\bibitem{lovett2010universal}
N.~B. Lovett, S.~Cooper, M.~Everitt, M.~Trevers, and V.~Kendon, ``Universal
  quantum computation using the discrete-time quantum walk,'' {\em Physical
  Review A}, vol.~81, no.~4, p.~042330, 2010.

\bibitem{ambainis2003quantum}
A.~Ambainis, ``Quantum walks and their algorithmic applications,'' {\em
  International Journal of Quantum Information}, vol.~1, no.~04, pp.~507--518,
  2003.

\bibitem{shenvi2003quantum}
N.~Shenvi, J.~Kempe, and K.~B. Whaley, ``Quantum random-walk search
  algorithm,'' {\em Physical Review A}, vol.~67, no.~5, p.~052307, 2003.

\bibitem{childs2004spatial}
A.~M. Childs and J.~Goldstone, ``Spatial search by quantum walk,'' {\em
  Physical Review A}, vol.~70, no.~2, p.~022314, 2004.

\bibitem{magniez2011search}
F.~Magniez, A.~Nayak, J.~Roland, and M.~Santha, ``Search via quantum walk,''
  {\em SIAM Journal on Computing}, vol.~40, no.~1, pp.~142--164, 2011.

\bibitem{childs2003exponential}
A.~M. Childs, R.~Cleve, E.~Deotto, E.~Farhi, S.~Gutmann, and D.~A. Spielman,
  ``Exponential algorithmic speedup by a quantum walk,'' in {\em Proceedings of
  the thirty-fifth annual ACM symposium on Theory of computing}, pp.~59--68,
  ACM, 2003.

\bibitem{szegedy2004quantum}
M.~Szegedy, ``Quantum speed-up of markov chain based algorithms,'' in {\em
  Foundations of Computer Science, 2004. Proceedings. 45th Annual IEEE
  Symposium on}, pp.~32--41, IEEE, 2004.

\bibitem{kempe2005discrete}
J.~Kempe, ``Discrete quantum walks hit exponentially faster,'' {\em Probability
  theory and related fields}, vol.~133, no.~2, pp.~215--235, 2005.

\bibitem{krovi2016quantum}
H.~Krovi, F.~Magniez, M.~Ozols, and J.~Roland, ``Quantum walks can find a
  marked element on any graph,'' {\em Algorithmica}, vol.~74, no.~2,
  pp.~851--907, 2016.

\bibitem{hoyer2016efficient}
P.~Hoyer and M.~Komeili, ``Efficient quantum walk on the grid with multiple
  marked elements,'' {\em arXiv preprint arXiv:1612.08958}, 2016.

\bibitem{aharonov2001quantum}
D.~Aharonov, A.~Ambainis, J.~Kempe, and U.~Vazirani, ``Quantum walks on
  graphs,'' in {\em Proceedings of the thirty-third annual ACM symposium on
  Theory of computing}, pp.~50--59, ACM, 2001.

\bibitem{moore2002quantum}
C.~Moore and A.~Russell, ``Quantum walks on the hypercube,'' in {\em
  International Workshop on Randomization and Approximation Techniques in
  Computer Science}, pp.~164--178, Springer, 2002.

\bibitem{richter2007quantum}
P.~C. Richter, ``Quantum speedup of classical mixing processes,'' {\em Physical
  Review A}, vol.~76, no.~4, p.~042306, 2007.

\bibitem{sinclair2012algorithms}
A.~Sinclair, {\em Algorithms for random generation and counting: a Markov chain
  approach}.
\newblock Springer Science \& Business Media, 2012.

\bibitem{chen1999lifting}
F.~Chen, L.~Lov{\'a}sz, and I.~Pak, ``Lifting markov chains to speed up
  mixing,'' in {\em Proceedings of the thirty-first annual ACM symposium on
  Theory of computing}, pp.~275--281, ACM, 1999.

\bibitem{feynman1982simulating}
R.~P. Feynman, ``Simulating physics with computers,'' {\em International
  journal of theoretical physics}, vol.~21, no.~6, pp.~467--488, 1982.

\bibitem{kitaev2002classical}
A.~Y. Kitaev, A.~Shen, and M.~N. Vyalyi, {\em Classical and quantum
  computation}, vol.~47.
\newblock American Mathematical Society Providence, 2002.

\bibitem{aharonov2008adiabatic}
D.~Aharonov, W.~Van~Dam, J.~Kempe, Z.~Landau, S.~Lloyd, and O.~Regev,
  ``Adiabatic quantum computation is equivalent to standard quantum
  computation,'' {\em SIAM review}, vol.~50, no.~4, pp.~755--787, 2008.

\bibitem{aaronson2005quantum}
S.~Aaronson, ``Quantum computing and hidden variables,'' {\em Physical Review
  A}, vol.~71, no.~3, p.~032325, 2005.

\bibitem{pavon2010discrete}
M.~Pavon and F.~Ticozzi, ``Discrete-time classical and quantum markovian
  evolutions: Maximum entropy problems on path space,'' {\em Journal of
  Mathematical Physics}, vol.~51, no.~4, p.~042104, 2010.

\bibitem{georgiou2015positive}
T.~T. Georgiou and M.~Pavon, ``Positive contraction mappings for classical and
  quantum schr{\"o}dinger systems,'' {\em Journal of Mathematical Physics},
  vol.~56, no.~3, p.~033301, 2015.

\bibitem{temme2010chi}
K.~Temme, M.~J. Kastoryano, M.~Ruskai, M.~M. Wolf, and F.~Verstraete, ``The
  $\chi$ 2-divergence and mixing times of quantum markov processes,'' {\em
  Journal of Mathematical Physics}, vol.~51, no.~12, p.~122201, 2010.

\bibitem{diaconis2000analysis}
P.~Diaconis, S.~Holmes, and R.~M. Neal, ``Analysis of a nonreversible markov
  chain sampler,'' {\em Annals of Applied Probability}, pp.~726--752, 2000.

\bibitem{diaconis2013spectral}
P.~Diaconis and L.~Miclo, ``On the spectral analysis of second-order markov
  chains,'' {\em Annales de la Facult{\'e} des Sciences de Toulouse.
  Math{\'e}matiques. S{\'e}rie 6}, vol.~22, no.~3, pp.~573--621, 2013.

\bibitem{meyer1996quantum}
D.~A. Meyer, ``From quantum cellular automata to quantum lattice gases,'' {\em
  Journal of Statistical Physics}, vol.~85, no.~5-6, pp.~551--574, 1996.

\bibitem{kendon2007decoherence}
V.~Kendon, ``Decoherence in quantum walks--a review,'' {\em Mathematical
  Structures in Computer Science}, vol.~17, no.~6, pp.~1169--1220, 2007.

\bibitem{marquezino2008mixing}
F.~Marquezino, R.~Portugal, G.~Abal, and R.~Donangelo, ``Mixing times in
  quantum walks on the hypercube,'' {\em Physical Review A}, vol.~77, no.~4,
  p.~042312, 2008.

\bibitem{apers2017lifting}
S.~Apers, F.~Ticozzi, and A.~Sarlette, ``Lifting markov chains to mix faster:
  Limits and opportunities,'' {\em arXiv preprint arXiv:1705.08253}, 2017.

\bibitem{marquezino2010mixing}
F.~L. Marquezino, R.~Portugal, and G.~Abal, ``Mixing times in quantum walks on
  two-dimensional grids,'' {\em Physical Review A}, vol.~82, no.~4, p.~042341,
  2010.

\bibitem{ford1956maximal}
L.~R. Ford and D.~R. Fulkerson, ``Maximal flow through a network,'' {\em
  Canadian journal of Mathematics}, vol.~8, no.~3, pp.~399--404, 1956.

\bibitem{levin2017markov}
D.~A. Levin and Y.~Peres, {\em Markov chains and mixing times}, vol.~107.
\newblock American Mathematical Soc., 2017.

\bibitem{aldous2002reversible}
D.~Aldous and J.~Fill, ``Reversible markov chains and random walks on graphs,''
  2002.

\bibitem{montenegro2006mathematical}
R.~Montenegro, P.~Tetali, {\em et~al.}, ``Mathematical aspects of mixing times
  in markov chains,'' {\em Foundations and Trends{\textregistered} in
  Theoretical Computer Science}, vol.~1, no.~3, pp.~237--354, 2006.

\end{thebibliography}

\appendix
\section{Supplemental Material} \label{appendix}
The objective of the paper is the comparison of Quantum Walks (QWs) and Lifted Markov Chains (LMCs). However, the main results can be extended to a more general setting that includes both QWs and LMCs, namely, local stochastic processes preserving the target distribution.
An example of such a generalized setting would be Cesaro averaging, i.e., to consider as output distribution the uniform time average of the evolution generated by a QW or LMC.
Here we shall provide detailed proofs of our results directly in this generalized setting. The main ideas remain the same as for the particular case of QWs.

The supplemental material is organized as follows. We start with some notation and defining the generalized class of processes that will be studied. We then explicitly show how QWs fall under this setting. The next three sections are respectively devoted to a detailed proof, with all mathematical details worked out, of each of the three theorems of the main paper.\\

\emph{Notation --} We first recall some notation that will be used throughout these notes. Let $\Gr$ be a graph with node set $\s{V}$ and edge set $\s{E}\subseteq \s{V} \times \s{V}$.
By convention, we include in the edge set all $(v,v)$, $v \in \s{V}$.
We define the in-neighborhood or simply neighborhood of a set $\s{X}\subseteq\s{V}$ as $\s{B}(\s{X}) = \{v\in\s{V}\backslash\s{X}:(v,v')\in\s{E} \text{ for some } v'\in\s{X}\}.$ Note that we will keep, throughout the report, this notation such that ``probability mass flows from $v$ to $v'\;$''. We create a \textit{lifted graph} by expanding the node set to $\s{C} \times \s{V}$, for some finite set $\s{C}$, so the nodes for the lifted graph are pairs $(c,v)$, with $\,c\in\s{C},\, v\in\s{V},$ and we let the edge set be a subset of $\big\{\,\big((c,v),(c',v')\big)\,|\,(v,v')\in\s{E}\big\}$.
We associate the Hilbert space $\Hi_{\s{C}\times \s{V}} = \sp\{\ket{c,v}|c\in\s{C},v\in\s{V} \}$ to this graph and call $\dos(\Hi_{\s{C}\times\s{V}})$ the set of density operators over $\Hi_{\s{C}\times\s{V}}$, that is, positive semidefinite Hermitian operators of trace one.
For any subset $\s{X}\subseteq\s{V}$ and $\dens \in\dos(\Hi_{\s{C}\times \s{V}})$, we define $\pg{\s{X}}{\dens} = \tr(\o{\Pi}_{\s{X}}\dens)$, where $\o{\Pi}_{\s{X}} = \id \otimes \sum_{v\in\s{X}}\ket{v}\bra{v}$ is the projector onto the subspace associated to the subset of nodes $\s{X}$ of the original graph $\Gr$. More generally, we will use the standard notation $\pg{E}{p}$ to denote the probability of some event $E$ according to the probability measure $p$; occasionally we also use the notation $\pg{E}{p}=p(E)$. We will also denote, as in the main paper, by $\cl{v}$ the probability (column) vector with all weight on the node $v \in \s{V}$, and by $\cl{v}^\dagger$ the dual classical (row) vector. Using the tensor product $\cl{c} \otimes \cl{v} = \cl{(c,v)}$ also known as the Kronecker product of vectors, we get the probability vector with all weight on the single event $(c,v) \in \s{C} \times \s{V}$.

\subsection{Local stochastic processes}

In this section we introduce the concept of local stochastic dynamics, and we show how QWs and LMCs fall under this general framework. A stochastic map $\gc$ over $\s{V}$ is function that maps a probability distribution $\p_0$ to another probability distribution $\p_1$; it is linear and preserves both the positivity and the sum of the components of $p_0$. The general stochastic processes which we consider are a family of stochastic linear maps $\gc_t$, indexed by time $t\in\natural$, and which map an initial probability distribution $\p_0$ over $\s{V}$ to a probability distribution $\p_t=\gc_t[\p_0]$ over $\s{V}$ at each time $t$.
We say that the family $\gc_t$ is \textit{local} with respect to a graph $\Gr$ with nodes $\s{V}$ if and only if \cite{aharonov2001quantum}:
\begin{equation}\label{eq:rtuiop}
      \text{For all } \s{X}\subseteq \s{V}, \p_0, t> 0,\; 
          {\text{ it holds that }} \;\;
              \pg{\s{X}}{\p_{t+1}} \leq \pg{\s{X}}{\p_t} + \pg{\s{B}(\s{X})}{\p_t}.
\end{equation}
This formula expresses the intuitive statement from the main paper, that a node $\s{X} = \{v\}$ cannot contain more population at time $t+1$, than the population at time $t$ on itself and on its neighbors $\s{B}(v)$.
  
The family $\gc_t$ is \textit{invariant} with respect to a distribution $\pilim$, or short $\pilim$-invariant, if and only if $\gc_t[\pilim] = \pilim$, $\forall t\in\natural$. This expresses that $\p_t = \pilim$ for all $t\geq 0$ when $\p_0=\pilim$, and it ensures that the process \emph{stabilizes} $\pilim$ at all times.

\subsubsection{Quantum channels as $\pilim$-invariant local stochastic processes}

We will now show how such a family of abstract processes $\gc_t$ explicitly covers the specific case of $\p_t$ induced by QWs.
Thereto, let $\qc:\dos(\Hi_{\s{C}\times\s{V}})\to\dos(\Hi_{\s{C}\times\s{V}})$ be a completely positive trace-preserving (CPTP) map representing a QW, defined by
  \[ \qc[\dens] = \sum_k\o{M_k} \dens \o{M_k}^\dag, \]
together with a linear initialization map $\o{F}:p_0\mapsto \sum_{v\in \s{V}} \p_0(v) \ket{c_v,v}\bra{c_v,v}$.
For any starting condition $p_0 = \cl{v}$, we can compute the distribution $\p_t$ induced by the QW as the diagonal of the partial trace over $\s{C}$ of $\qc^t[\o{F}[\p_0]]$. Thanks to linearity of all these steps, the computation of the resulting evolutions $\p_0,\p_1,...$ can be described by a family of linear maps $\gc_t$ such that $\p_t = \gc_t[\p_0]$, for general $\p_0$ too.
Of course they preserve positivity and total probability, so they are stochastic; and if a target distribution $\pilim$ is invariant under $\qc$, then  obviously it is invariant too under the family of induced $\gc_t$. In the following lemma we prove that if $\qc$ is local, in the sense that the $\o{M_k}$ have zero entries where nodes are not connected in $\Gr$, then so is $\gc_t$ in the sense of Eq.\eqref{eq:rtuiop}.

\begin{lem} \label{lem:local}
Let $\qc$ be a quantum channel.
The following statements are equivalent:
\begin{itemize}
  \item[(a)] For all $c,c'\in\s{C},\;v,v'\in\s{V}$, it holds: if $(v,v') \notin \s{E} \text{ then }\Braket{c',v'|\o{M}_l|c,v} = 0$ $\forall l$.
  \item[(b)] For all $\s{X}\subseteq \s{V}$ and $\dens \in \dos(\Hi_{\s{C}\times\s{V}}),\; {\text{ it holds that }} \;\; \pg{\s{X}}{\qc[\rho]} \leq \pg{\s{X}}{\rho} + \pg{\s{B}(\s{X})}{\rho}.$
\end{itemize}
\end{lem}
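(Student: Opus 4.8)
The plan is to establish the equivalence of (a) and (b) by proving both implications directly, treating the locality condition on the Kraus operators as a statement about which matrix elements of $\qc[\rho]$ can be nonzero, and then converting population inequalities into an algebraic count. First I would fix notation: for a subset $\s{X} \subseteq \s{V}$, write $\o{\Pi}_{\s{X}} = \id_{\s{C}} \otimes \sum_{v \in \s{X}} \ket{v}\bra{v}$ so that $\pg{\s{X}}{\rho} = \tr(\o{\Pi}_{\s{X}} \rho)$, and recall that the neighborhood $\s{B}(\s{X})$ collects exactly the nodes outside $\s{X}$ with an edge \emph{into} $\s{X}$. The governing identity throughout will be $\pg{\s{X}}{\qc[\rho]} = \sum_k \tr(\o{\Pi}_{\s{X}} \o{M}_k \rho \o{M}_k^\dag) = \sum_k \tr(\o{M}_k^\dag \o{\Pi}_{\s{X}} \o{M}_k \, \rho)$, so the whole argument reduces to understanding the positive operator $\o{A}_{\s{X}} := \sum_k \o{M}_k^\dag \o{\Pi}_{\s{X}} \o{M}_k$ and comparing it to $\o{\Pi}_{\s{X}} + \o{\Pi}_{\s{B}(\s{X})}$.

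For the direction (a) $\Rightarrow$ (b), I would argue that under the locality assumption, the operator $\o{\Pi}_{\s{X}} \o{M}_k$ has support confined to columns indexed by nodes in $\s{X} \cup \s{B}(\s{X})$: if $(v,v') \notin \s{E}$ then $\braket{c',v'|\o{M}_k|c,v} = 0$, so a nonzero row entry landing in $\s{X}$ (i.e.\ $v' \in \s{X}$) forces the source node $v$ to satisfy $(v,v') \in \s{E}$, hence $v \in \s{X}$ or $v \in \s{B}(\s{X})$. This means $\o{\Pi}_{\s{X}} \o{M}_k = \o{\Pi}_{\s{X}} \o{M}_k \, \o{\Pi}_{\s{X} \cup \s{B}(\s{X})}$, and therefore $\o{A}_{\s{X}} = \o{\Pi}_{\s{X} \cup \s{B}(\s{X})} \big(\sum_k \o{M}_k^\dag \o{\Pi}_{\s{X}} \o{M}_k\big) \o{\Pi}_{\s{X} \cup \s{B}(\s{X})}$. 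Now I bound $\o{\Pi}_{\s{X}} \leq \id$, use trace-preservation $\sum_k \o{M}_k^\dag \o{M}_k = \id$ to get $\o{A}_{\s{X}} \leq \o{\Pi}_{\s{X}\cup\s{B}(\s{X})}$ as positive operators, and conclude $\pg{\s{X}}{\qc[\rho]} = \tr(\o{A}_{\s{X}} \rho) \leq \tr(\o{\Pi}_{\s{X}\cup\s{B}(\s{X})}\rho) = \pg{\s{X}}{\rho} + \pg{\s{B}(\s{X})}{\rho}$, the last equality holding because $\s{X}$ and $\s{B}(\s{X})$ are disjoint by definition.

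For the converse (b) $\Rightarrow$ (a), I would argue by contraposition: suppose $\braket{c',v'|\o{M}_l|c,v} \neq 0$ for some $l$ with $(v,v') \notin \s{E}$, and exhibit a state $\rho$ violating (b). The natural candidate is a pure state localized on the offending source, $\rho = \ket{c,v}\bra{c,v}$, tested against the singleton $\s{X} = \{v'\}$. Since $(v,v') \notin \s{E}$ and the edge set contains all self-loops $(v,v)$, we have $v \neq v'$ and $v \notin \s{B}(\{v'\})$ (a node in $\s{B}(\{v'\})$ must have an edge into $v'$), so the right-hand side $\pg{\{v'\}}{\rho} + \pg{\s{B}(\{v'\})}{\rho}$ vanishes, whereas $\pg{\{v'\}}{\qc[\rho]} \geq \sum_{c''} |\braket{c'',v'|\o{M}_l|c,v}|^2 \geq |\braket{c',v'|\o{M}_l|c,v}|^2 > 0$, contradicting (b). The one subtlety I expect here is making sure the chosen $\s{X}$ genuinely excludes $v$ from both $\s{X}$ and its neighborhood; this is exactly where the paper's convention of including all self-loops $(v,v)$ in $\s{E}$ is used, guaranteeing $v \notin \s{B}(\{v'\})$ for $v' \neq v$, so the violation is clean.

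I expect the main obstacle to be the bookkeeping in (a) $\Rightarrow$ (b): correctly identifying which columns survive after applying $\o{\Pi}_{\s{X}}$ on the left (it is the \emph{source} nodes that get constrained, via edges pointing \emph{into} $\s{X}$, matching the ``probability flows from $v$ to $v'$'' convention), and then justifying the operator inequality $\sum_k \o{M}_k^\dag \o{\Pi}_{\s{X}} \o{M}_k \leq \o{\Pi}_{\s{X}\cup\s{B}(\s{X})}$ rigorously rather than just componentwise. The cleanest way is the sandwiching-by-projectors identity above combined with $0 \leq \o{\Pi}_{\s{X}} \leq \id$ and the CPTP normalization; everything else is routine.
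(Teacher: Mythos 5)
Your proof is correct, and the backward direction coincides with the paper's (contrapositive, test state $\ket{c,v}\bra{c,v}$ against a set excluding $v$ and its neighborhood; you make the choice $\s{X}=\{v'\}$ explicit where the paper leaves it implicit). The forward direction, however, is packaged genuinely differently: the paper works in the Schr\"odinger picture, reduces to pure states by linearity, decomposes $\ket{\psi}=\ket{\psi_{\s{X}}}+\ket{\psi_{\s{B}(\s{X})}}+\ket{\psi_{\s{X}^c\backslash\s{B}(\s{X})}}$, kills the third component's contribution via locality, and then invokes trace preservation plus orthogonality of the surviving components; you instead pass to the Heisenberg picture and prove the single operator inequality $\sum_k \o{M}_k^\dag\o{\Pi}_{\s{X}}\o{M}_k \leq \o{\Pi}_{\s{X}\cup\s{B}(\s{X})}$ by the sandwiching identity $\o{\Pi}_{\s{X}}\o{M}_k=\o{\Pi}_{\s{X}}\o{M}_k\o{\Pi}_{\s{X}\cup\s{B}(\s{X})}$ together with $\o{\Pi}_{\s{X}}\leq\id$ and CPTP normalization. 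Your route buys a cleaner argument that never needs the reduction to pure states or the handling of cross terms (which the paper's displayed chain of inequalities quietly absorbs in its first step), and it isolates the reusable fact that the dual channel maps $\o{\Pi}_{\s{X}}$ below $\o{\Pi}_{\s{X}\cup\s{B}(\s{X})}$; the paper's version is more transparent about the physical mechanism, namely which amplitude components of the state can propagate into $\s{X}$ in one step. Both correctly track the edge orientation (sources constrained by edges flowing into $\s{X}$), and both reach the same bound, so there is no gap.
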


\begin{proof}
``\textit{(a)} $\Rightarrow$ \textit{(b)}": We will show that the inequality in \textit{(b)} holds for a one-dimensional projection $\rho=\ket{\psi}\bra{\psi},$ $\ket{\psi}\in\Hi_{\s{C}\times\s{V}}$; due to linearity of the involved operators in $\rho$, the inequality must then necessarily hold also for all density operators $\rho\in\dos(\Hi_{\s{C}\times\s{V}})$, being convex combinations of projections.
We can write
\begin{align*}
  \pg{\s{X}}{\qc[\ket{\psi}\bra{\psi}]}
    &= \sum_{c'\in\s{C},v'\in\s{X}} \braket{c',v'|\qc[\ket{\psi}\bra{\psi}]|c',v'} \\
    &= \sum_{c'\in\s{C},v'\in\s{X}}\sum_l\braket{c',v'|\o{M}_l|\psi}\braket{\psi|\o{M}_l^\dag|c',v'} \\
    &= \sum_{c'\in\s{C},v'\in\s{X}}\sum_l|\Braket{c',v'|\o{M}_l|\psi}|^2. 
 \end{align*}
Since we assume that \textit{(a)} holds, we have that $\forall l$, $\; \Braket{c',v'|\o{M}_l|c,v}=0$ for $v'\in \s{X}$ and $v\in \s{X}^c\backslash \s{B}(\s{X})$, where $\s{X}^c=\s{V}\backslash \s{X}$. 
If we now write $\ket{\psi}=\ket{\psi_\s{X}}+\ket{\psi_{\s{B}_\s{X}}}+\ket{\psi_{\s{X}^c\backslash \s{B}_\s{X}}}$, where $\ket{\psi_{\s{Y}}} \equiv \o{\Pi}_{\s{Y}}\ket{\psi}$ for any $\s{Y}\subseteq \s{V}$, then \textit{(a)}, in particular, implies that $\forall c' \in \s{C}, v'\in\s{X}$ and all $l$, we have $\; \Braket{c',v'|\o{M}_l|\psi_{X^c\backslash B_X}}=0.$
Intuitively, this expresses that $\ket{\psi_{\s{X}^c\backslash \s{B}_{\s{X}}}}$ does not contribute to the probability of observing $\Pi_X$ after the action of $\qc$. Inserting $\Braket{c',v'|\o{M}_l|\psi_{X^c\backslash B_X}}=0$ into the above sum, we thus get:
\begin{eqnarray*}
  \pg{\s{X}}{\qc[\ket{\psi}\bra{\psi}]}
    &=& \pg{\s{X}}{\qc[(\ket{\psi_{\s{X}}}+\ket{\psi_{\s{B}(\s{X})}})(\bra{\psi_{\s{X}}} +\bra{\psi_{\s{B}(\s{X})}})]} \\
&&\!\!\!\!\!\!\!\text{\scriptsize (going from population on $\s{X} \cup \s{B}(\s{X})$ to population on $\s{X}$ after applying $\qc$)}\\    &\leq& 
\tr(\qc[(\ket{\psi_{\s{X}}}+\ket{\psi_{\s{B}(\s{X})}})(\bra{\psi_{\s{X}}} +\bra{\psi_{\s{B}(\s{X})}})])\\
&&\!\!\!\!\!\!\!\text{\scriptsize ($\qc$ is trace-preserving and $\ket{\psi_{\s{X}}}$ orthogonal to $\ket{\psi_{\s{B}(\s{X})}}$)}\\   &=& 
\tr((\ket{\psi_{\s{X}}}+\ket{\psi_{\s{B}(\s{X})}})(\bra{\psi_{\s{X}}} +\bra{\psi_{\s{B}(\s{X})}})) \\
&=& \tr(\ket{\psi_{\s{X}}}\bra{\psi_{\s{X}}})
              +\tr(\ket{\psi_{\s{B}_{\s{X}}}}\bra{\psi_{\s{B}_{\s{X}}}}) \\
    &=& \pg{\s{X}}{\ket{\psi}\bra{\psi}} + \pg{\s{B}(\s{X})}{\ket{\psi}\bra{\psi}}.
\end{eqnarray*}

"\textit{(b)} $\Rightarrow$ \textit{(a)}": Assume that \textit{(a)} does not hold.
Thus, there exists some $l$, some $c,c'\in\s{C}$, some $v'\in \s{X}$ and $v\in \s{X}^c\backslash \s{B}_{\s{X}}\;$ such that $\; \Braket{c',v'|\o{M}_l|c,v} \neq 0$.
If we now consider $\ket{\psi}=\ket{c,v}$, then $\pg{\s{X}}{\ket{c,v}\bra{c,v}} + \pg{\s{B}(\s{X})}{\ket{c,v}\bra{c,v}} = 0$, whereas $\pg{\s{X}}{\qc[\ket{c,v}\bra{c,v}] }\geq \pg{(c',v')}{\qc[\ket{c,v}\bra{c,v}]} = |\Braket{c',v'|\o{M}_l|c,v}|^2 >0$. So \textit{(b)} does not hold when \textit{(a)} does not; thus conversely, if \textit{(b)} holds then \textit{(a)} must hold too.
\end{proof}
In the light of the above lemma, a quantum channel is said to be {\em local} with respect to a reference lifted graph if $(a),$ or equivalently $(b),$ holds; and from (b) thus, the associated $\gc_t$ will be local in the sense of Eq.\eqref{eq:rtuiop} too.

\subsection{Proof of Theorem 1}\label{sec:II}

In the main paper, Theorem 1 essentially states that QW mixing can be simulated by an LMC, and the main steps of its proof are described for this setting.
Here we provide a formal proof for a more general statement: the mixing performance of any stochastic process that is local and invariant can be simulated using a suitably constructed local LMC.

\subsubsection{Simulability of stochastic linear maps}

In the following, we first show that the $p_t$ generated by a local stochastic map, starting {\em from any given initial distribution}, can always be simulated by a sequence of \emph{stochastic transition matrices} which \emph{each satisfy the graph locality.} This sequence will be in general dependent on the initial distribution.
The lemma and proof are a generalization of the result by \cite{aaronson2005quantum} from unitary evolution to abstract stochastic linear maps.

\begin{lem}[Local simulability] \label{lem:sim}
 If $\gc_t$ is local, then for every pair $(\p_0,t)$ with $t>0$ there exists a local stochastic matrix $\o{P}_t^{(\p_0)}$ such that $\p_t = \o{P}_t^{(\p_0)}\p_{t-1}$, where $\p_t = \gc_t[\p_0]$.
\end{lem}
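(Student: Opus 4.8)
The plan is to recognize the statement as a \emph{transportation problem}: given the source distribution $\p_{t-1}$ and the target distribution $\p_t=\gc_t[\p_0]$, I must route all the mass of $\p_{t-1}$ onto $\p_t$ using only transfers along edges of $\Gr$, and then read off the column-stochastic matrix $\o{P}_t^{(\p_0)}$ from this routing. The existence of such a routing I would establish with the max-flow/min-cut theorem (equivalently, Gale's supply--demand theorem), whose feasibility condition will turn out to be \emph{exactly} the locality inequality Eq.\eqref{eq:rtuiop}.

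Concretely, I would build a capacitated flow network with a source $s$, a sink $\tau$, and two copies $v_-,v_+$ of each graph node. Add arcs $s\to v_-$ of capacity $\p_{t-1}(v)$, arcs $v'_+\to \tau$ of capacity $\p_t(v')$, and an uncapacitated (infinite-capacity) arc $v_-\to v'_+$ whenever $(v,v')\in\s{E}$, so that mass may only move along graph edges, self-loops $(v,v)$ included. Since $\sum_v \p_{t-1}(v)=\sum_{v'}\p_t(v')=1$, any feasible flow has value at most $1$, and a flow of value exactly $1$ must saturate every source and sink arc, i.e.\ it is a coupling of $\p_{t-1}$ and $\p_t$ supported on $\s{E}$.

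The key step is the cut computation. A finite-capacity $s$--$\tau$ cut cannot sever any infinite interior arc; analysing which arcs it is then forced to cut, one finds that the minimal such cut is parametrised by a subset $\s{X}\subseteq\s{V}$ (the demand nodes left on the sink side) and has capacity
\[
  1 - \pg{\s{X}}{\p_t} + \pg{\s{X}}{\p_{t-1}} + \pg{\s{B}(\s{X})}{\p_{t-1}}.
\]
This is $\geq 1$ for every $\s{X}$ precisely when $\pg{\s{X}}{\p_t}\leq \pg{\s{X}}{\p_{t-1}}+\pg{\s{B}(\s{X})}{\p_{t-1}}$, which is the locality hypothesis on $\gc_t$. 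Hence the minimum cut equals $1$, so by max-flow/min-cut a unit flow $f$ exists. Writing $f(v,v')$ for the mass it sends along $v_-\to v'_+$, I would set $\clt{v'}\o{P}_t^{(\p_0)}\cl{v} = f(v,v')/\p_{t-1}(v)$ whenever $\p_{t-1}(v)>0$, and complete each column with $\p_{t-1}(v)=0$ by a self-loop $\clt{v}\o{P}_t^{(\p_0)}\cl{v}=1$ (admissible since $(v,v)\in\s{E}$). Saturation of the source arcs makes every column sum to $1$, the interior arcs enforce locality (the support of $f$ lies in $\s{E}$), and saturation of the sink arcs gives $(\o{P}_t^{(\p_0)}\p_{t-1})(v')=\sum_v f(v,v')=\p_t(v')$; the padded columns are invisible to $\p_{t-1}$, so $\o{P}_t^{(\p_0)}\p_{t-1}=\p_t$, as required.

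The main obstacle is getting the cut analysis and the direction conventions exactly right, so that the min-cut condition reproduces Eq.\eqref{eq:rtuiop} verbatim: in particular aligning ``mass flows from $v$ to $v'$'' with the in-neighborhood definition of $\s{B}(\s{X})$ and correctly accounting for the self-loops that let mass stay in place. Once that identification is secured the rest is bookkeeping, and the real-valued (rather than integral) version of max-flow/min-cut applies directly, since the capacities here are arbitrary nonnegative reals.
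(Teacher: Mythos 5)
Your proposal is correct and follows essentially the same route as the paper: the same source/sink capacitated network over two copies of $\s{V}$, the same cut parametrization by a subset $\s{X}$ giving cut value $1 - \pg{\s{X}}{\p_t} + \pg{\s{X}}{\p_{t-1}} + \pg{\s{B}(\s{X})}{\p_{t-1}}$, and the same identification of the feasibility condition with the locality inequality, after which the matrix is read off by normalizing the unit flow. The only differences are cosmetic (infinite versus unit capacity on the interior arcs, and your explicit self-loop padding of zero-mass columns, a detail the paper leaves implicit).
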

\begin{proof}
 Call $y=\p_{t-1}$ and $z=\p_t$. In order to prove the above statement, it is convenient to resort to results concerning \emph{flows over capacitated networks} \cite{ford1956maximal}, and, in particular, consider the graph shown in Figure \ref{fig:stoch-bridge}, where each edge is assigned a corresponding weight, or capacity.
 The network consists of a source node $s$, a sink node $r$, and two copies $\s{W}$ and $\s{W}'$ of the set of node states $\s{V}$.
Node $s$ is connected with capacity $y(v)$ to any node $v\in \s{W}$; any node $v\in \s{W}$ is connected with capacity 1 to any node $v'\in \s{W}'$ iff $(v,v')\in \s{E}$, else the nodes are not connected; and any node $v'\in \s{W}'$ is connected with capacity $z(v')$ to node $r$. The capacities $y(v)$ and $z(v')$, respectively from $s$ and to $r$, thus reflect the probability distributions to be mapped.
The key observation is the following: if this network can route a steady flow of value 1 from node $s$ to node $r$, then the fraction from $v \in \s{W}$ that is routed towards $v'\in \s{W}'$ directly defines the entry $e^\dag_{v'}\o{P}^{(\p_0)}_te_{v}$ that we need, and also denoted $\o{P}^{(\p_0)}_t(v',v)$.
Indeed, to route a flow of value 1 the edges from $r$ to $\s{W}$ will have to be used to their full capacities $y(v)$, such that the flow through the edges from $\s{W'}$ to $s$ becomes $z(v') = \sum_{v\in\s{V}} \o{P}_t^{(\p_0)}(v',v)\,y(v)$; so we would have $\o{P}_t^{(\p_0)}y = z$ as claimed.
 
    \begin{figure}[htb]
     \centering
     \def\svgwidth{.8\columnwidth}
     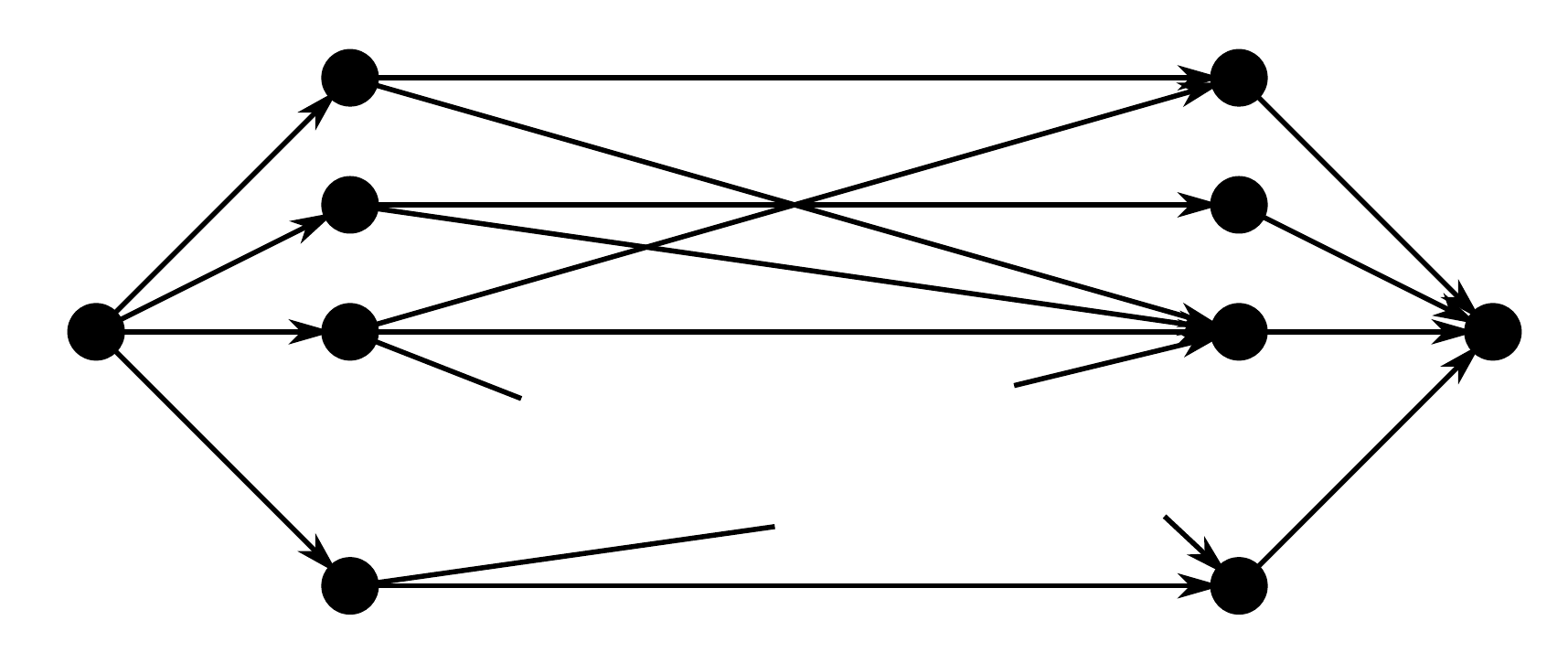
     \caption{Capacitated network construction used in Lemma \ref{lem:sim}.}
     \label{fig:stoch-bridge}
    \end{figure}
 
The max-flow min-cut theorem \cite{ford1956maximal} states that the maximum steady flow which can be routed from node $s$ to node $r$ is equal to the minimum cut value of the graph, where a cut value is the sum of the capacities of a set of edges that disconnects $s$ from $r$.
 It is clear that cutting all edges arriving at $r$ disconnects the graph, with a cut value of $1$, whereas cutting any middle edge between $\s{W}$ and $\s{W}'$ gives a cut value $\geq 1$.
So the minimum cut should not include any of these ``middle'' edges, and it must be some combination of edges starting on $s$ or arriving at $r$. 
Assume that we know the optimal cut, and let $\s{X} \subseteq \s{W}'$ such that the cut involves the edges from the {\em complement} of $\s{X} \subseteq \s{W}'$ to $r$. To block any flow from $s$ to $r$ while keeping all middle edges, we must then cut the edges from $s$ to all the $l \in \s{W}$ which have an edge to $\s{X}$.
This corresponds to all $l \in \s{X} \cup \s{B}(\s{X})$. The value of this cut is thus
   \[ 1 - \pg{\s{X}}{z} + \pg{\s{X}}{y} + \pg{\s{B}(\s{X})}{y}. \]
Recalling that $y=\p_{t}$ and $z=\p_{t+1}$, locality imposes
   \[ \pg{\s{X}}{z} \leq \pg{\s{X}}{y} + \pg{\s{B}(\s{X})}{y}, \]
from which it follows that the minimum value of the cut is $\geq 1$. This minimum is attained (among others) with cutting all edges arriving at $r$, i.e., with $\s{X}$ the empty set. Hence, the minimum cut value is 1 and a $\o{P}^{(\p_0)}_t$ solution to our problem exists.\end{proof}

\subsubsection{Amplification lemma}

Lemma \ref{lem:sim} is instrumental in proving Thm.1 of the main paper for a finite time frame, by simulating the QW up to some given time. The following will be instrumental to prove the theorem for arbitrary time, showing that a \emph{finite-memory} process is sufficient to extend this mixing performance to arbitrarily small $\epsilon>0$. In particular, we now show that, given an evolution map that mixes up to a certain total variation distance, we can iterate this map in order to mix to arbitrarily small distance, a process informally known as {\em amplification}.

\begin{lem}[Amplification lemma] \label{lem:amp}
Assume that $\gc_t$ is a family of stochastic linear maps that mixes to an invariant $\pilim,$ and admits a mixing time $\overline{\tau}(\epsilon)$ for all $\epsilon>0$.
 Then for any $\epsilon_0 < 1/2$, its amplified version defined as
   \[ \widetilde{\gc}_t = \gc_{t\,\mathrm{mod}\,T} \left(\gc_T\right)^{\lfloor t/T\rfloor}, \]
 with $T = \overline{\tau}(\epsilon_0)$, has a mixing time $\tau(\epsilon) \leq \overline{\tau}(\epsilon_0) \cdot \lceil \,\log(1/\epsilon)\,/\,\log(1/(2\epsilon_0))\,\rceil$ for all $\epsilon>0$.
\end{lem}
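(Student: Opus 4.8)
The plan is to reduce the statement to a standard contraction estimate based on the \emph{Dobrushin ergodicity coefficient}. Writing $\|q-q'\|_{\mathrm{TV}} = \tfrac12\sum_{v\in\Nd}|q(v)-q'(v)|$ for the total-variation distance of the main text, I define for any stochastic matrix $\o{P}$ on $\Nd$ the coefficient $\delta(\o{P}) = \max_{v,v'\in\Nd}\|\o{P}\cl{v}-\o{P}\cl{v'}\|_{\mathrm{TV}}$, i.e.\ the largest total-variation distance between two columns of $\o{P}$. I will use three elementary facts: (i) $0\le\delta(\o{P})\le 1$ for every stochastic $\o{P}$; (ii) submultiplicativity, $\delta(\o{P}_1\o{P}_2)\le\delta(\o{P}_1)\,\delta(\o{P}_2)$; and (iii) $\delta$ bounds the contraction of the map on differences of distributions, so that $\|\o{P}q_1-\o{P}q_2\|_{\mathrm{TV}}\le\delta(\o{P})\,\|q_1-q_2\|_{\mathrm{TV}}$ for all distributions $q_1,q_2$.

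First I would turn the mixing hypothesis into a bound on $\delta(\gc_T)$. By definition of $T=\overline{\tau}(\epsilon_0)$, the stochastic matrix $\gc_T$ satisfies $\|\gc_T q-\pilim\|_{\mathrm{TV}}\le\epsilon_0$ for \emph{every} distribution $q$. Applying this to the two point masses $\cl{v}$ and $\cl{v'}$ and inserting $\pilim$ through the triangle inequality gives $\|\gc_T\cl{v}-\gc_T\cl{v'}\|_{\mathrm{TV}}\le 2\epsilon_0$ for all $v,v'$, hence $\delta(\gc_T)\le 2\epsilon_0$. Since $\epsilon_0<1/2$, this is a genuine contraction factor strictly below $1$.

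Next I would exploit the structure of the amplified family. Writing $t=kT+s$ with $k=\lfloor t/T\rfloor$ and $s=t\bmod T$, the amplified map acts as $\widetilde{\gc}_t[p_0]=\gc_s\big[(\gc_T)^k[p_0]\big]$. Because $\pilim$ is invariant, $\gc_s[\pilim]=\pilim$ and $(\gc_T)^k[\pilim]=\pilim$, so facts (ii) and (iii) together with $\delta(\gc_s)\le 1$ give
\[
  \big\|\widetilde{\gc}_t[p_0]-\pilim\big\|_{\mathrm{TV}}
    \le \delta(\gc_s)\,\delta\big((\gc_T)^k\big)\,\|p_0-\pilim\|_{\mathrm{TV}}
    \le (2\epsilon_0)^k,
\]
using $\delta\big((\gc_T)^k\big)\le\delta(\gc_T)^k\le(2\epsilon_0)^k$ and $\|p_0-\pilim\|_{\mathrm{TV}}\le 1$. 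It remains to solve $(2\epsilon_0)^k\le\epsilon$: since $2\epsilon_0<1$, this holds as soon as $k\ge\log(1/\epsilon)/\log(1/(2\epsilon_0))$, so requiring $t\ge T\,\lceil\log(1/\epsilon)/\log(1/(2\epsilon_0))\rceil$ forces $k=\lfloor t/T\rfloor$ to be large enough, and since $(2\epsilon_0)^k$ decreases in $k$ the bound persists for all larger $t$. This yields precisely $\tau(\epsilon)\le\overline{\tau}(\epsilon_0)\cdot\lceil\log(1/\epsilon)/\log(1/(2\epsilon_0))\rceil$.

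The main obstacle is conceptual and lies in steps (ii)--(iii): the hypothesis bounds only the distance from $\gc_T[p_0]$ to the single fixed distribution $\pilim$ and says nothing directly about iterating $\gc_T$. The key realization is that a uniform distance-$\epsilon_0$ guarantee to a \emph{common} fixed point forces the columns of $\gc_T$ to be pairwise $2\epsilon_0$-close, which is exactly a bound on the Dobrushin coefficient; contraction of the distance to $\pilim$ under iteration then follows automatically from submultiplicativity, while the residual partial period $\gc_s$ is harmless because it fixes $\pilim$ and cannot increase the total-variation distance.
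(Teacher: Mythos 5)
Your proof is correct and follows essentially the same route as the paper: the quantity you call the Dobrushin coefficient $\delta(\gc_T)=\max_{v,v'}\|\gc_T\cl{v}-\gc_T\cl{v'}\|_{TV}$ is exactly the $\max_{\p,\p'}\|\gc_T[\p]-\gc_T[\p']\|_{TV}$ that the paper bounds by $2\epsilon_0$ via the triangle inequality through $\pilim$ and then iterates by submultiplicativity, and your treatment of the residual partial period via $\delta(\gc_s)\le 1$ matches the paper's appeal to contractivity of stochastic maps. The only difference is that you name the coefficient explicitly rather than citing the submultiplicativity statement from the Markov-chain literature.
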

\begin{proof}
We will thus check that at any time $t \geq T \cdot \lceil \,\log(1/\epsilon)\,/\,\log(1/(2\epsilon_0))\,\rceil = \kappa \cdot T$, $\kappa\in\mathbb N$, the total variation distance to $\pilim$ is lower than $\epsilon$. The proof uses invariance of $\pilim$ under $\gc_t$ to transform $\gc_t[\p]-\pilim$ into $\gc_t[\p-\pilim]$.

For $t=\kappa\cdot T$, we get
    \begin{eqnarray*}
      \max_\p \| \widetilde \gc_{\kappa\cdot T}[\p] - \pilim \|_{TV}
        &=& \max_\p \| \left(\gc_T\right)^\kappa[\p] - \pilim \|_{TV}
        \;\leq\; \max_{\p,\p'} \| \left(\gc_T\right)^\kappa[\p] - \left(\gc_T\right)^\kappa[\p'] \|_{TV}    \\  
&&\!\!\!\!\!\!\! \text{\scriptsize (see justification below)}  \\
    &\leq& \left( \max_{\p,\p'} \| \gc_{T}[\p] - \gc_{T}[\p'] \|_{TV} \right)^\kappa \\
        &\leq& \left( 2 \max_\p \| \gc_{T}[\p] - \pilim \|_{TV} \right)^\kappa \; \leq (2\epsilon_0)^\kappa \; \leq \epsilon \; .
    \end{eqnarray*}
On the last line we have used that $\max_\p \| \o{\Psi}_{T}[\p] - \pilim \|_{TV} \leq \epsilon_0$, and that $(2\epsilon_0)^\kappa \leq \epsilon$ as soon as $\kappa \geq \log(1/\epsilon)/\log(1/(2\epsilon_0))$; from first to second line we have used the submultiplicativity of the total variation norm under any linear map in the form stated in \cite{levin2017markov}.

For $t = t' + \kappa\cdot T$ with any $t'>0$, we know that 
    \[ \| \gc_t[\p] - \pilim \|_{TV}
            = \|  \gc_t[\p - \pilim] \|_{TV}
            = \| \gc_{t'}[\gc_{\kappa\cdot T}[\p-\pilim]] \|_{TV}
                      \leq \| \gc_{\kappa\cdot T}[\p-\pilim] \|_{TV}, \; \text{ for all } \p, \]
thanks to contractivity of the 1-norm under stochastic maps. So finally we find that, for arbitrary $t\geq 0$,
  \[ 
      \max_\p \| \widetilde \gc_t[\p] - \pilim \|_{TV}
        \leq \max_\p \| \widetilde \gc_{\lfloor t/T\rfloor \cdot T}[\p] - \pilim \|_{TV}
        \leq \epsilon,
  \]
if $t \geq \overline{\tau}(\epsilon_0) \cdot \lceil \,\log(1/\epsilon)\,/\,\log(1/(2\epsilon_0))\,\rceil$.
\end{proof}

\subsubsection{Proof of main theorem} \label{sec:thm}
We can now finalize the proof of a generalized form of Theorem 1 of the main paper for a general stochastic process associated to a family of linear maps $\gc_t$ over the node set $\s{V}$ that are local, and that leave the target distribution $\pilim$ invariant.
\begin{thm}[main paper Thm.1, generalized version] \label{thm:main}
 Let $\gc_t$ be a stochastic linear map that mixes to some distribution $\pilim$ with mixing time $\overline{\tau}(\epsilon)$, satisfying some locality constraint and leaving $\pilim$ invariant. Then for any $\epsilon_0 < 1/2$ we can construct an LMC that satisfies the same locality constraint and that mixes to $\pilim$ with a mixing time $\tau(\epsilon) = \overline{\tau}(\epsilon)$ for all $\epsilon\geq \epsilon_0$, and a mixing time
$\;\; \tau(\epsilon)
        \leq \overline{\tau}(\epsilon_0) \cdot \left\lceil\, \log(1/\epsilon)\,/\,\log(1/(2\epsilon_0)) \right\rceil$
 for all $\epsilon > 0$.
\end{thm}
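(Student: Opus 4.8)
The plan is to chain the two preceding lemmas. Lemma~\ref{lem:sim} replaces the possibly non-Markovian, initial-condition-dependent evolution of $\p_t$ by honest local stochastic matrices step by step, and Lemma~\ref{lem:amp} turns a finite-horizon simulation into one contracting to $\pilim$ for every $\epsilon$. The obstruction to applying Lemma~\ref{lem:sim} directly is that the matrices $\o{P}_l^{(\p_0)}$ depend on the initial distribution, so no single time-homogeneous matrix on $\s{V}$ can reproduce $\gc_t$ from all starts. I would resolve this as sketched in the main text: by linearity it suffices to track the process launched from each pure node $\cl{v}$ separately, and I encode the identity of that launching node together with the elapsed time into an enlarged coin.

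Concretely, fix $T=\overline{\tau}(\epsilon_0)$ and, for each $v\in\s{V}$ and $l\in\{1,\dots,T\}$, let $\o{P}_l^{(v)}$ be the local matrix from Lemma~\ref{lem:sim} for the start $\cl{v}$, so $\gc_l[\cl{v}]=\o{P}_l^{(v)}\gc_{l-1}[\cl{v}]$. I build an LMC on $\s{C}'\times\s{V}$ with coin set $\s{C}'=\s{V}\times\{0,1,\dots,T\}$, time-homogeneous transition matrix
\[
 \P = \sum_{v\in\s{V}} \cl{v}\clt{v} \otimes \left( \sum_{l=0}^{T-1} \cl{l+1}\clt{l} \otimes \o{P}_{l+1}^{(v)} + \cl{T}\clt{T}\otimes\id \right),
\]
and initialization $F:\cl{v}\mapsto \cl{v}\otimes\cl{0}\otimes\cl{v}$. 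The coin conditionally selects $\o{P}_{l+1}^{(v)}$ while advancing its clock from $l$ to $l+1$ and never moving the node part outside $\s{E}$, so $\P$ and $F$ satisfy the locality constraint. A short induction then shows that, initialized at $F(\p_0)$, the state after $l\le T$ steps is $\sum_v \p_0(v)\,\cl{v}\otimes\cl{l}\otimes\gc_l[\cl{v}]$, whose node-marginal is exactly $\gc_l[\p_0]=\p_l$. Hence on the window $[0,T]$ the LMC reproduces $\gc_t$ verbatim, and since $\overline{\tau}(\epsilon)\le T$ for $\epsilon\ge\epsilon_0$ this already yields $\tau(\epsilon)=\overline{\tau}(\epsilon)$ for all $\epsilon\ge\epsilon_0$.

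To reach arbitrary $\epsilon$ I would amplify by resetting the clock at coin-time $T$: redefine $\P$ on that slice by
\[
\P\,\big(\cl{v_0}\otimes\cl{T}\otimes\cl{u}\big) = \cl{u}\otimes\cl{0}\otimes\cl{u} = F(\cl{u}),
\]
promoting the current node $u$ to a fresh launching node. The decisive bookkeeping is that marginalizing the state $\sum_v \p_0(v)\,\cl{v}\otimes\cl{T}\otimes\gc_T[\cl{v}]$ under this reset returns precisely $F\big(\gc_T[\p_0]\big)$: the reset only relabels the coin, leaves the node-marginal $\gc_T[\p_0]=\p_T$ untouched, and re-encodes it as a valid initialization via linearity of $F$. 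Iterating, the node-marginal at time $t=t'+\kappa T$ equals $\gc_{t'}\big[(\gc_T)^\kappa[\p_0]\big]$, which is exactly the amplified map $\widetilde{\gc}_t$ of Lemma~\ref{lem:amp}; since the reset again stays on the same node it remains local. Invoking Lemma~\ref{lem:amp}, whose contraction estimate is where $\pilim$-invariance enters, then gives $\tau(\epsilon)\le\overline{\tau}(\epsilon_0)\,\lceil\log(1/\epsilon)/\log(1/(2\epsilon_0))\rceil$ for all $\epsilon>0$.

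I expect the hard part to be neither lemma itself but the reconciliation in the last two steps: verifying that lifting the launching node into the coin genuinely linearizes the $\p_0$-dependence of Lemma~\ref{lem:sim}, and that the clock-boundary reset reproduces \emph{exactly} $F(\p_T)$ so that Lemma~\ref{lem:amp} applies with no loss. The argument is entirely careful marginalization over the coin, but it is the place where locality, time-homogeneity, stochasticity, and $\pilim$-invariance must all hold simultaneously.
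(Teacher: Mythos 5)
Your proposal is correct and follows essentially the same route as the paper's own proof: the same clock-plus-launch-node coin $\s{C}'=\s{V}\times\{0,\dots,T\}$ built from the stochastic bridges of Lemma~\ref{lem:sim}, the same initialization $F$ and marginalization argument for the window $[0,T]$, and the same reset of the $T$-slice to $F(\cl{u})$ so that Lemma~\ref{lem:amp} delivers the amplified bound. The only differences are cosmetic (your indexing of $\o{P}_{l+1}^{(v)}$ and the explicitly stated induction invariant), so there is nothing to add.
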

\begin{proof}
The proof essentially combines the two previous lemmas, and for the rest it follows the construction from the main text. We shall first use Lemma \ref{lem:sim} to build a lifted Markov chain that simulates the dynamics of this channel up to time $t=\tau(\epsilon_0)$, and next apply the amplification lemma \ref{lem:amp} to prove exponential convergence for $\epsilon<\epsilon_0$ and thus $t>\tau(\epsilon_0)$.

\noindent $\bullet$ \textit{[First part: construction for $t \leq \tau(\epsilon_0)$]} Lemma \ref{lem:sim} tells us that for every initial state $\p_0$ and given time bound $T$, there exists a local \textit{stochastic bridge} $\{\o{P}_t^{\p_0}\}_{t=1,2,...,T}$ such that, for all $t \in (0,T]$, we have
    \[ \gc_t[\p_0] = \o{P}_t^{\p_0}\dots \o{P}_1^{\p_0} \p_0. \]
This allows us to construct the operator sets $\left\{\o{P}_t^{\cl{v}}\right\}_{t=1,2,...,T}$ for $v\in \s{V}$, where we recall that $\cl{v}$ is the elementary vector corresponding to node $v,$ representing the classical probability vector whose entire weight is on node $v$.
We will combine these bridges into a {\em single and time-invariant lifted Markov chain $\o{P}$}, mapping probability distributions over the extended set $\hat{\s{V}}=\s{C}'\times\s{V}=(\s{V}\times\{1,2,\dots,T\})\times\s{V}$, where $T=\overline{\tau}(\epsilon_0)$. Let $\cl{v_0}\otimes \cl{l}\otimes \cl{v}$ denote the probability (column) vector over $\hat{\s{V}}$ whose weight is centered on element $(v_0,l,v)\in \hat{\s{V}}$, and by $\clt{v_0}\otimes\clt{l}\otimes\clt{v}$ we denote the dual or adjoint (row) vector.
 Now we can define 
    \begin{equation} \label{eq:lift} 
      \o{P} \equiv \sum_{v\in\s{V}}
         \cl{v}\clt{v} \otimes \left( \sum_{t=0}^{T-1}\cl{t+1}\clt{t} \otimes P_t^{\cl{v}} + \cl{T}\clt{T} \otimes \id_{\s{V}}\right).
    \end{equation}
To complete the construction, the above evolution should be locally initialized according to the map $\o{F}$, which maps any probability vector $\p$ over $\s{V}$ to a probability vector $\o{F}[\p]$ over $\hat{\s{V}}$, defined as
    \begin{equation} \label{eq:init} 
      \o{F}[\p] = \sum_{v\in\s{V}} \p(v) \cl{v} \otimes \cl{0} \otimes \cl{v}.
    \end{equation}
 The probability vector $\o{F}[\p]$ is such that $\o{P}^t \o{F}[\p]$ induces the same marginal distribution on $\s{V}$ as $\gc_t[\p]$ for a fixed time frame:
    \[ f(\o{P}^t \o{F}[\p]) = \gc_t[\p] \quad \text{ for all } t \in [0,\overline{\tau}(\epsilon_0)]\;, \]
 where $f$ computes the marginal probability distribution induced by the lift on $\s{V}$, i.e.,
    \[ f \left( \sum_{v_0,l,v} p(v_0,l,v) \cl{v_0}\otimes\cl{l}\otimes\cl{v} \right) = \sum_v \left( \sum_{v_0,l} p(v_0,l,v)) \right) \cl{v}. \]
 As a consequence, initial states $\o{F}[\p]$, with an arbitrary $\p$ over $\Nd$, will mix on $\s{V}$ with the same mixing time $\overline{\tau}(\epsilon)$ as $\gc_t$ for any $\tau(\epsilon) \leq \tau(\epsilon_0)$, i.e., for any $\epsilon \geq \epsilon_0$. This proves the first part of the theorem.

\noindent $\bullet$ \textit{[Second part: modifying the construction towards $t > \tau(\epsilon_0)$]}
As the size of the lift transition matrix $\o{P}$ scales linearly with $\overline{\tau}(\epsilon_0)$, which in general is unbounded for $\epsilon_0 \to 0$, the above construction only makes sense for fixed $\epsilon_0$.
 Towards building a lift for arbitrary $\epsilon>0$, and thus prove the second part of the theorem, we invoke the amplification lemma \ref{lem:amp}.
 The lemma shows that, instead of the full process ${\gc}_t$, we can simulate the simpler one $\widetilde{\gc}_t$, defined as  
   \[ \widetilde{\gc}_t = \gc_{t\,\mathrm{mod}\,T} \left(\gc_T\right)^{\lfloor t/T\rfloor}, \]
and ensure a mixing time $\tau(\epsilon) \leq \tau(\epsilon_0) \cdot \lceil \log(1/\epsilon)\,/\,\log(1/(2\epsilon_0)) \rceil$.
It is not difficult to show that the evolution induced by $\widetilde{\gc}_t$ can in fact be simulated for an arbitrary number of steps, with a lift of fixed size.
To this aim, we modify the lift construction of the first part, namely replacing the unit probability of staying at ${(v_0,T,v)}$ by a unit probability to jump from ${(v_0,T,v)}$ to ${(v,0,v)}$. Explicitly, we thus adapt the lift as follows:
\begin{equation}\label{as:eq:modif}
\o{P} \equiv \sum_{v\in\s{V}} \sum_{t=0}^{T-1} \cl{v}\clt{v} \otimes \cl{t+1}\clt{t} \otimes P_t^{\cl{v}}
          + \sum_{v,v_0\in \s{V}} \cl{v}\clt{v_0} \otimes \cl{0}\clt{T} \otimes \cl{v}\clt{v}. \end{equation}
When associated to the same initialization map $\o{F}$ and marginalization $f$, this transition matrix gives exactly the same output distributions $\p_t$ over $\s{V}$ as the LMC of the first part, for all $t \leq T$. 
At $t=T$, in fact \eqref{as:eq:modif} takes the output $\p_T = f(\o{P}^T\o{F}[\p]) = \o{\Psi}_T[\p]$ of the LMC constructed in the first part, and reinitiates the walk with $F(\p_T)$ for the next steps. It follows that with \eqref{as:eq:modif} we have:
  \[ f(\o{P}^t \o{F}[\p]) = \widetilde{\gc}_t[\p],\quad \forall t\geq 0. \]
Lemma \ref{lem:amp} implies the conclusion about mixing time for all $\epsilon > 0$.
\end{proof}

The version reported in the main text is the second part of the above theorem for the special case of a stochastic process generated from a QW.

\subsection{Proof of conductance bound}

We have just shown that quantum channels, and stochastic linear maps in general, can be simulated by lifted Markov chains under appropriate conditions. Accordingly, we can prove a conductance bound for quantum channels and stochastic linear maps by building on a similar bound for LMCs that we provide in Lemma \ref{lem:lift-mixing} below.
It is a generalization of the bound formulated in for instance \cite{chen1999lifting}, to the setting where the Markov chain is initialized on a lifted space with some local map $\o{F}$ and where the convergence to a limit distribution only involves the \emph{marginal} over $\s{V}$.

Before going into the statement of the lemmas, let us formalize some concepts more rigorously with the notation of this Supplemental Material.
We say that an LMC $\o{P}$ with initialization map $\o{F}:\s{V}\to\s{C}\times \s{V}$ mixes to $\pilim$ with a \textit{mixing time} $\tau(\epsilon)$ for all $\epsilon>0$ if, for any $\p$ over $\s{V}$, the induced distribution of $\o{P}^t\o{F}[\p]$ over $\s{V}$ is $\epsilon$-close in total variation distance to $\pilim$ for all $t\geq \tau(\epsilon)$.
We will bound this mixing time using the \textit{conductance}, a quantity that we can associate to a general irreducible Markov chain $\o{P}$ on $\s{V}$. We recall that if $\o{P}$ has a stationary distribution $\pilim$, then its conductance $\Phi(\o{P})$ is defined as
      \[ \Phi(\o{P}) 
                          =  \min_{\s{X}\subseteq \s{V}: 0<\pilim(\s{X}) \leq \frac{1}{2}} \Phi_{\s{X}}(\o{P}), \quad
          \text{ with }\; \Phi_{\s{X}}(\o{P}) = \frac{\o{Q}_{\o{P}}(\s{X}^c,\s{X})}{\pi(\s{X})}, \]
where $\pilim(\s{X}) = \pg{\s{X}}{\pilim}$ and $\o{Q}_{\o{P}}(\s{X}^c,\s{X}) = \sum_{v\in \s{X},v'\in \s{X}^c}\o{P}(v',v)\pilim(v)$ is the ergodic flow from $\s{X}$ to its complement. Here we use as earlier the notation $\o{P}(v',v) = \clt{v'} \o{P} \cl{v}$.
We can also associate a conductance to a graph and distribution, without specifying an associated Markov chain. The \textit{graph conductance} $\Phi_{\pilim}$ with respect to some distribution $\pilim$ is defined as $\Phi_{\pilim} = \max_{\o{P}'}\Phi(\o{P}')$, where the maximization runs over all $\o{P'}$ satisfying the locality of the graph and $\o{P}'\pilim = \pilim$.

To any lifted Markov chain on a lifted graph, we can associate an \textit{induced chain} on the original graph, as introduced in \cite{aldous2002reversible}.
Thereto, let $\o{P}$ be an irreducible lifted Markov chain on the nodes of a lifted graph $\s{C} \times \s{V}$, having stationary distribution $\hat{\pilim}$.
The induced chain $\o{P}_{\s{V}}$ over $\s{V}$ is defined by
      \[  \o{P}_{\s{V}}(v',v)  = \sum_{c,c'\in\s{C}} \frac{\hat{\pilim}(c,v)}{\pilim(v)}\o{P}((c',v'),(c,v)), \]
where $\pilim$ represents the stationary distribution of $\o{P}_{\s{V}}$, defined by $\pilim(v) = \hat{\pilim}(\s{C} \times v)$.
This definition is motivated by obtaining matching ergodic flows $\o{Q}_{\o{P}_\s{V}}(v',v) = \o{Q}_{\o{P}}(\s{C}\times v',\s{C}\times v)$ and so for any subset $\s{X} \subseteq \s{V}$: 
        \begin{equation} \label{eq:marg-cond} 
          \Phi_{\s{X}}(\o{P}_{\s{V}}) = \Phi_{\s{C} \times \s{X}}(\o{P}).
        \end{equation}
This readily implies that $\Phi(\o{P}_{\s{V}}) \geq \Phi(\o{P})$. We also have that $\Phi_{\pilim} \geq \Phi(\o{P}_{\s{V}})$, with $\Phi_{\pilim}$ the graph conductance associated to $\pilim$ on the non-lifted graph. Indeed by definition $\o{P}_{\s{V}}$ obeys the graph locality and $\o{P}_{\s{V}}\pilim = \pilim$, i.e., it is an element of the set over which the graph conductance $\Phi_{\pilim}$ is maximized.

We next borrow standard techniques, as presented in for instance \cite{montenegro2006mathematical} and \cite{levin2017markov}, to prove two instrumental lemmas.
 \begin{lem} \label{lemma:cut}
   Consider an irreducible Markov chain $\o{P}$ over a set $\s{V}$, with unique stationary distribution $\pilim$. Then
      \[  \pg{\s{X}^c}{\o{P}^t\pilim_{\s{X}}} \leq t\, \Phi_{\s{X}}(\o{P}) \quad \text{for all } \s{X}\subseteq \s{V}, t\geq 0 \; , \]
   where $\pilim_{\s{X}}(v) = \pilim(v)/\pilim(\s{X})$, for $v\in\s{X}$, and zero elsewhere.
 \end{lem}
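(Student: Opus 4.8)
The plan is to control the escape of probability from $\s{X}$ one step at a time, but with the right quantity to induct on. The single step is a direct computation: since $\pilim_{\s{X}}$ is supported on $\s{X}$,
\[
  \pg{\s{X}^c}{\o{P}\pilim_{\s{X}}}
    = \sum_{v\in\s{X},\,v'\in\s{X}^c} \o{P}(v',v)\,\frac{\pilim(v)}{\pilim(\s{X})}
    = \frac{\o{Q}_{\o{P}}(\s{X}^c,\s{X})}{\pilim(\s{X})}
    = \Phi_{\s{X}}(\o{P}),
\]
which settles $t=1$ (and $t=0$ is trivial, since $\pg{\s{X}^c}{\pilim_{\s{X}}}=0$). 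The naive guess that each subsequent step contributes at most $\Phi_{\s{X}}(\o{P})$ to $\pg{\s{X}^c}{\o{P}^t\pilim_{\s{X}}}$ fails when argued directly, because mass can flow back and forth across the boundary; so instead I would track only the mass that has \emph{never yet} left $\s{X}$.

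Concretely, I introduce the \emph{absorbed} sub-chain obtained by killing the walk on exit from $\s{X}$: let $\o{K}$ be the principal submatrix of $\o{P}$ indexed by $\s{X}$, and put $\nu_{s} = \o{K}^{s}\pilim_{\s{X}}$, the sub-distribution of trajectories remaining in $\s{X}$ at all times $0,\dots,s$. Writing $a_{s+1} = \sum_{v\in\s{X},\,v'\in\s{X}^c}\o{P}(v',v)\,\nu_{s}(v)$ for the probability of crossing the boundary for the first time at step $s+1$, a first-passage decomposition gives
\[
  \pg{\s{X}^c}{\o{P}^t\pilim_{\s{X}}} \;\leq\; \sum_{s=0}^{t-1} a_{s+1},
\]
since being in $\s{X}^c$ at time $t$ forces a first crossing at some step $s+1\leq t$, and these first-crossing events are disjoint.

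The crux --- and the step I expect to be the main obstacle --- is to show that the absorbed mass never exceeds the conditional stationary profile, that is, $\nu_{s}(v)\leq\pilim_{\s{X}}(v)$ for all $v\in\s{X}$ and $s\geq 0$. This holds because $\pilim_{\s{X}}$ is \emph{excessive} (super-invariant) for $\o{K}$: using global stationarity $\o{P}\pilim=\pilim$ and discarding the nonnegative inflow from $\s{X}^c$,
\[
  (\o{K}\pilim_{\s{X}})(v') = \sum_{v\in\s{X}}\o{P}(v',v)\,\pilim_{\s{X}}(v) \;\leq\; \sum_{v\in\s{V}}\o{P}(v',v)\,\frac{\pilim(v)}{\pilim(\s{X})} = \pilim_{\s{X}}(v') \qquad (v'\in\s{X}).
\]
Since $\o{K}$ has nonnegative entries it is monotone, so the pointwise bound $\o{K}\pilim_{\s{X}}\leq\pilim_{\s{X}}$ propagates under iteration: by induction $\nu_{s+1}=\o{K}\nu_{s}\leq\o{K}\pilim_{\s{X}}\leq\pilim_{\s{X}}$. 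This excessiveness is precisely what stops mass from piling up inside $\s{X}$ and leaking faster than it did on the first step; without it the single-step conductance bound would not transfer to later times.

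Finally I would assemble the estimate. Inserting $\nu_{s}\leq\pilim_{\s{X}}$ into the escape term reduces every first-passage probability to the single-step value,
\[
  a_{s+1} \leq \sum_{v\in\s{X},\,v'\in\s{X}^c}\o{P}(v',v)\,\frac{\pilim(v)}{\pilim(\s{X})} = \Phi_{\s{X}}(\o{P}),
\]
and summing the $t$ terms $s=0,\dots,t-1$ yields $\pg{\s{X}^c}{\o{P}^t\pilim_{\s{X}}}\leq t\,\Phi_{\s{X}}(\o{P})$, as claimed.
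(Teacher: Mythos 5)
Your proof is correct, but it takes a genuinely different route from the paper's. The paper converts the escape probability into a total variation distance, namely it shows $\pg{\s{X}^c}{\o{P}^t\pilim_{\s{X}}} \leq \| \o{P}^t\pilim_{\s{X}}-\pilim_{\s{X}}\|_{TV}$ (with equality at $t=1$), then telescopes $\o{P}^t\pilim_{\s{X}}-\pilim_{\s{X}}$ into $t$ increments and bounds each by $\| \o{P}\pilim_{\s{X}}-\pilim_{\s{X}}\|_{TV}=\Phi_{\s{X}}(\o{P})$ using contractivity of the TV norm under stochastic maps. You instead argue probabilistically via the killed chain $\o{K}$ on $\s{X}$: a first-passage decomposition bounds the escape probability by the sum of first-crossing probabilities, and the excessiveness $\o{K}\pilim_{\s{X}}\leq\pilim_{\s{X}}$ together with monotonicity of $\o{K}$ caps each of these by $\Phi_{\s{X}}(\o{P})$. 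The two arguments share exactly one computation --- your excessiveness inequality is the same pointwise bound $(\o{P}\pilim_{\s{X}})(v')\leq\pilim_{\s{X}}(v')$ on $\s{X}$ that the paper uses to identify the single-step escape with the TV distance --- but they leverage it differently. Your version is more elementary (no appeal to norm contractivity) and in fact proves the slightly stronger statement that the probability of having \emph{ever} left $\s{X}$ by time $t$ is at most $t\,\Phi_{\s{X}}(\o{P})$; the paper's version is shorter given that contractivity of the TV norm is already invoked in the amplification lemma, and its intermediate TV-distance formulation plugs directly into the marginalization argument of Lemma \ref{lem:lift-mixing}. Both only use stationarity $\o{P}\pilim=\pilim$, so neither actually needs irreducibility.
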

 \begin{proof}
   Note that
    \[ \pg{X^c}{\o{P}\pilim_{\s{X}}}
          = \sum_{v\in \s{X},v'\in \s{X}^c} \frac{\o{P}(v',v)\cdot \pilim(v)}{\pilim(\s{X})} 
                          = \Phi_{\s{X}}(\o{P}). \]
   We will first prove the following inequalities:
    \begin{equation} \label{eq:ineq} 
      \pg{\s{X}^c}{\o{P}\pilim_{\s{X}}} = \left\| \o{P}\pilim_{\s{X}}-\pilim_{\s{X}} \right\|_{TV}, \quad \text{and} \quad
      \pg{X^c}{\o{P}^t\pilim_{\s{X}}} \leq \left\| \o{P}^t\pilim_{\s{X}}-\pilim_{\s{X}} \right\|_{TV},
        \;\forall t\geq 0.
   \end{equation}
   To obtain the first inequality, we can use the equivalent definition of the total variation distance:
    \[ \left\| \o{P}\pilim_{\s{X}} - \pilim_{\s{X}} \right\|_{TV} 
        = \sum_{v\in V:(\o{P}\pilim_{\s{X}})(v) \geq \pilim_{\s{X}}(v)} (\o{P}\pilim_{\s{X}})(v) - \pilim_{\s{X}}(v). \]
   We then observe that on the set $\s{X}^c$, $\o{P}\pilim_{\s{X}}$ will be elementwise larger than or equal to $\pilim_{\s{X}}$, since the latter is zero on $\s{X}^c$; whereas on the set $\s{X}$, $\o{P} \pilim_{\s{X}}$ will be elementwise smaller than $\pilim_{\s{X}}$:
\[ (\o{P}\pilim_{\s{X}})(v') = \frac{\sum_{v\in \s{X}} \o{P}(v',v)\pilim(v)}{\pilim(\s{X})}
        \leq \frac{\sum_{v\in \s{V}}\o{P}(v',v) \pilim(v)}{\pilim(\s{X})} = \frac{\pilim(v')}{\pilim(\s{X})}=  \pilim_{\s{X}}(v') \quad \text{since $P\pilim=\pilim$}. \]
   Now we can rewrite $\sum_{v\in \s{V}:(\o{P}\pilim_{\s{X}})(v) \geq \pilim_{\s{X}}(v)} (\o{P}\pilim_{\s{X}})(v) - \pilim_{\s{X}}(v) = \sum_{v\in \s{X}^c} (\o{P}\pilim_{\s{X}})(v) = \pg{\s{X}^c}{\o{P}\pilim_{\s{X}}}$. To obtain the inequality in \eqref{eq:ineq} we expand the total variation norm:
    \begin{align*} 
      \| \o{P}^t\pilim_{\s{X}} - \pilim_{\s{X}} \|_{TV} 
        &= \frac{1}{2}\sum_{v\in \s{V}}|(\o{P}^t\pilim_{\s{X}})(v) - \pilim_{\s{X}}(v)|
      = \frac{1}{2}\pg{X^c}{\o{P}^t\pilim_{\s{X}}} + \frac{1}{2}\sum_{v\in \s{X}}|(\o{P}^t\pilim_{\s{X}})(v) 
                      - \pilim_{\s{X}}(v)| \\
      &\geq \frac{1}{2}\pg{\s{X}^c}{\o{P}^t\pilim_{\s{X}}} 
                      + \frac{1}{2}\left|\sum_{v\in \s{X}}(\o{P}^t\pilim_{\s{X}})(v) - \pilim_{\s{X}}(v)\right| \\
       &= \frac{1}{2}\pg{\s{X}^c}{\o{P}^t\pilim_{\s{X}}} + \frac{1}{2}\left|1-\pg{X^c}{\o{P}^t\pilim_{\s{X}}} - 1\right|
      = \pg{X^c}{\o{P}^t\pilim_{\s{X}}}.
    \end{align*}
Starting with this inequality, we apply the triangle inequality on $\o{P}^t\p - \p =(\o{P}^t\p - \o{P}^{t-1}\p ) + (\o{P}^{t-1}\p - \o{P}^{t-2}\p) + \dots + (\o{P}\p - \p)$; next we bound each term by $\left\| \o{P}\pilim_{\s{X}} - \pilim_{\s{X}} \right\|_{TV}$ thanks to contractivity, i.e., as at the end of the proof of Lemma \ref{lem:amp}, the fact that for arbitrary distributions $\p,\p'$ and a stochastic matrix $\o{P}$, we have
      $\;\; \|\o{P}\p - \o{P}\p' \|_{TV} \leq \| \p - \p' \|_{TV}\;\;$; and finally we apply the equality from \eqref{eq:ineq}. This yields:
      \[ \pg{X^c}{\o{P}^t\pilim_{\s{X}}}
                    \leq \left\| \o{P}^t\pilim_{\s{X}}-\pi_{\s{X}} \right\|_{TV}
            \leq t\left\| \o{P}\pilim_{\s{X}} - \pilim_{\s{X}} \right\|_{TV} = t\, \Phi_{\s{X}}(\o{P}). \]
  \end{proof}
  
  \begin{lem} \label{lem:lift-mixing}
    Consider an irreducible lifted Markov chain $\o{P}$ on a lifted state space $\s{C} \times \s{V}$, and call $\o{P}_{\s{V}}$ its induced chain on $\s{V}$. Then its mixing time on $\s{V}$ satisfies
      \[ \tau(1/4) \geq \frac{1}{4\Phi(\o{P}_{\s{V}})} \geq \frac{1}{4\Phi_{\pilim}}, \]
    where $\Phi_{\pilim}$ is the graph conductance associated to the corresponding limit distribution on $\s{V}$.
  \end{lem}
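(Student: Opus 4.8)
The plan is to exhibit a single slow-mixing initial condition, concentrated on the worst bottleneck of the induced chain $\o{P}_{\s{V}}$, and to control its escape rate by applying the cut estimate of Lemma \ref{lemma:cut} on the \emph{lifted} space. First I would fix a set $\s{X}\subseteq\s{V}$ with $0<\pilim(\s{X})\leq\tfrac12$ attaining the minimum in the conductance of the induced chain, so that $\Phi_{\s{X}}(\o{P}_{\s{V}})=\Phi(\o{P}_{\s{V}})$. The structural input that makes everything go through is the conductance-matching identity \eqref{eq:marg-cond}, which lifts this bottleneck: $\Phi_{\s{C}\times\s{X}}(\o{P})=\Phi_{\s{X}}(\o{P}_{\s{V}})$. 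I would then initialize the lifted chain at the restricted stationary distribution $\hat{\pilim}_{\s{C}\times\s{X}}$, i.e.\ $\hat{\pilim}$ conditioned on $\s{C}\times\s{X}$, whose marginal on $\s{V}$ is precisely $\pilim_{\s{X}}$ since $\pilim(v)=\hat{\pilim}(\s{C}\times v)$.

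Applying Lemma \ref{lemma:cut} to $\o{P}$ with the subset $\s{C}\times\s{X}$ gives
\[ \pg{\s{C}\times\s{X}^c}{\o{P}^{t}\,\hat{\pilim}_{\s{C}\times\s{X}}} \;\leq\; t\,\Phi_{\s{C}\times\s{X}}(\o{P}) \;=\; t\,\Phi(\o{P}_{\s{V}}). \]
Passing to the marginal is then immediate: since $(\s{C}\times\s{X})^c=\s{C}\times\s{X}^c$, the marginal mass on $\s{X}^c$ equals the lifted mass on $\s{C}\times\s{X}^c$, so the induced node distribution $p_t$ obeys $p_t(\s{X}^c)\leq t\,\Phi(\o{P}_{\s{V}})$, starting from $p_0(\s{X}^c)=0$.

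Because $\pilim(\s{X}^c)=1-\pilim(\s{X})\geq\tfrac12$, testing the total variation distance against the event $\s{X}^c$ yields
\[ \| p_t-\pilim \|_{TV} \;\geq\; \pilim(\s{X}^c)-p_t(\s{X}^c) \;\geq\; \tfrac12 - t\,\Phi(\o{P}_{\s{V}}). \]
Requiring the right-hand side to fall to $1/4$ forces $t\geq 1/(4\Phi(\o{P}_{\s{V}}))$, which is exactly $\tau(1/4)\geq 1/(4\Phi(\o{P}_{\s{V}}))$. The second inequality is then immediate from the comparison already established in the setup, namely that $\o{P}_{\s{V}}$ obeys the graph locality and fixes $\pilim$, hence competes in the maximization defining $\Phi_{\pilim}$, so $\Phi(\o{P}_{\s{V}})\leq\Phi_{\pilim}$ and therefore $1/(4\Phi(\o{P}_{\s{V}}))\geq 1/(4\Phi_{\pilim})$.

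The step I expect to be most delicate is certifying that $\hat{\pilim}_{\s{C}\times\s{X}}$ is a legitimate initial condition for the mixing-time notion in force. Lemma \ref{lemma:cut} is tailored to the restricted \emph{stationary} distribution, and it is exactly this choice that turns the one-step ergodic flow into the conductance; a generic lifted distribution sharing the marginal $\pilim_{\s{X}}$ need not escape as slowly, so I cannot weaken the witness. I would therefore phrase the mixing time here as quantified over all lifted initial distributions, making $\hat{\pilim}_{\s{C}\times\s{X}}$ admissible, and only afterwards specialize to the coin-initialized ($\o{F}$) setting. The remaining bookkeeping — that $\hat{\pilim}$ exists by irreducibility and that $\pilim(v)=\hat{\pilim}(\s{C}\times v)$ is consistent with the definition of $\o{P}_{\s{V}}$ — is routine given the constructions preceding the lemma.
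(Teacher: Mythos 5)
Your proposal is correct and follows essentially the same route as the paper's proof: the same witness $\hat{\pilim}_{\s{C}\times\s{X}}$, the same application of Lemma \ref{lemma:cut} on the lifted space combined with the conductance-matching identity \eqref{eq:marg-cond}, and the same lower bound on total variation by testing against the event $\s{X}^c$ (which is exactly what the paper's composition of the marginalizations $f$ and $g_{\s{X}}$ with the reverse triangle inequality accomplishes). Your remark about quantifying the mixing time over all lifted initial distributions so that $\hat{\pilim}_{\s{C}\times\s{X}}$ is admissible is a point the paper leaves implicit, and handling it explicitly is a minor improvement rather than a divergence.
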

  \begin{proof}
    Let $f$ be the function computing the marginal distribution over $\s{V}$ from a distribution over $\s{C}\times \s{V}$. By applying the reverse triangle inequality, it is easily seen that
      \begin{align*}
        \left\| \o{P}^t\p - \hat{\pilim} \right\|_{TV}
          &= \frac{1}{2} \sum_{(c,v)\in \s{C}\times\s{V}}|(\o{P}^t\p)(c,v) - \hat{\pilim}(c,v)| \\
          &\geq \frac{1}{2} \sum_{v\in V}
                        \left| \sum_{c\in \s{C}}(\o{P}^t\p)(c,v) - \hat{\pilim}(c,v) \right|
          = \left\| f(\o{P}^t\p) - f(\hat{\pilim}) \right\|_{TV}.
       \end{align*}
    Now take a subset $\s{X}\subseteq \s{V}$ such that $\hat{\pilim}(\s{C}\times \s{X}) \leq 1/2$. Using this subset we define a second marginalization $g_{\s{X}}$ mapping distributions over the nodes of $\s{V}$ to the binary property $\{v \in \s{X}\}$ or $\{v \notin \s{X}\}$, i.e., $g_{\s{X}}(p)$ can be represented as a vector $[\sum_{v\in\s{X}}p(v)\, ; \, \sum_{v\notin \s{X}} p(v)]$. 
By a similar reasoning we get
          \[ \left\| f(\o{P}^t\p)-f(\hat{\pilim}) \right\|_{TV}
              \geq \left\|g_{\s{X}} (f(\o{P}^t\p)) - g_{\s{X}}(f(\hat{\pilim}))\right\|_{TV}. \]
If we take $\p = \hat{\pilim}_{\s{C} \times \s{X}}$ as defined in Lemma \ref{lemma:cut} and we use the triangle inequality again, it follows that   
      \begin{align*} 
        &\left\| g_{\s{X}}(f(\o{P}^t\hat{\pilim}_{\s{C} \times \s{X}})) - g_{\s{X}}(f(\hat{\pilim})) \right\|_{TV}\\
          &\quad\geq 
              \left\| g_{\s{X}}(f(\hat{\pilim}_{\s{C}\times \s{X}})) - g_{\s{X}}(f(\hat{\pilim})) \right\|_{TV}
                  - \left\| g_{\s{X}}(f(\hat{\pilim}_{\s{C}\times \s{X}})) 
                      - g_{\s{X}}(f(\hat{\o{P}}^t\hat{\pilim}_{\s{C}\times \s{X}}))\right\|_{TV} \\
          &\quad\geq \frac{1}{2} - \pg{\s{C}\times \s{X}^c}{\o{P}^t\hat{\pilim}_{\s{C}\times \s{X}}}
                      \geq \frac{1}{2} - t\Phi_{\s{C}\times \s{X}}(\o{P}) 
                          = \frac{1}{2} - t\Phi_{\s{X}}(\o{P}_{\s{V}}).
      \end{align*}
From first to second line, we have used that $\pg{x}{\,g_{\s{X}}(f(\hat{\pilim}_{\s{C} \times \s{X}}))\,}=1$, while 
$\pg{x}{\,g_{\s{X}}(f(\hat{\pilim}))\,} \leq 1/2$; this ensures that on the right hand side of the first line, the first term is $\geq 1/2$, while the second term boils down exactly to the probability to be in $\s{X}^c$. The last inequalities follow from Lemma \ref{lemma:cut} and equation \eqref{eq:marg-cond}.
We thus find altogether that 
$$\left\| f(\o{P}^t\p) - f(\hat{\pilim}) \right\|_{TV} \geq \frac{1}{2} - t\Phi_{\s{X}}(\o{P}_{\s{V}}) \, .$$
For $t=\tau(1/4)$, the left hand side must be smaller than $1/4$ and rearranging terms yields $\tau(1/4) \geq 1/(4\Phi_{\s{X}}(\o{P}_{\s{V}}))$. The same obviously holds true when minimizing $\Phi_{\s{X}}$ over $\s{X}$, yielding the statement with $\Phi(\o{P}_{\s{V}})$. The fact that $\Phi_{\pilim} \geq \Phi(\o{P}_{\s{V}})$ was already discussed after equation \eqref{eq:marg-cond}.
  \end{proof}
Combining Theorem \ref{thm:main} with the bound on the mixing time of LMCs provided by Lemma \ref{lem:lift-mixing}, leads to the following bound for quantum channels:
 \begin{thm}[main paper Thm.2, generalized version]
   Any local and invariant stochastic linear map has a mixing time $\tau(1/4) \geq 1/(4\Phi_{\pilim})$. As a consequence, any $\pilim$-invariant QW has a mixing time $\tau(1/4) \geq 1/(4\Phi_{\pilim})$.
   There exists such a QW that has a mixing time
  $\tau(\epsilon) \leq O(\,\log(1/\min_k\pilim_k) \, \log(1/\epsilon){\; /\; \Phi_{\pilim}}\,). $
 \end{thm}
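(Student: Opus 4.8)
The plan is to read off both bounds from the machinery already assembled, so that the theorem is essentially a corollary of Theorem~\ref{thm:main}, Lemma~\ref{lem:lift-mixing}, Lemma~\ref{lem:amp} and Lemma~\ref{lem:local}. For the lower bound I would start from a local, $\pilim$-invariant stochastic linear map $\gc_t$ with mixing time $\overline{\tau}(\epsilon)$ and apply the simulation result Theorem~\ref{thm:main} with $\epsilon_0 = 1/4$. This produces an (amplified) LMC $\o{P}$ on a lifted graph, with initialization $\o{F}$, that obeys the same graph locality and whose marginal on $\s{V}$ has mixing time $\tau(\epsilon) = \overline{\tau}(\epsilon)$ for all $\epsilon \geq 1/4$; in particular $\tau(1/4) = \overline{\tau}(1/4)$. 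The limit distribution of $\o{P}$ marginalizes to $\pilim$, so its induced chain $\o{P}_{\s{V}}$ on $\s{V}$ is local and satisfies $\o{P}_{\s{V}}\pilim = \pilim$, placing it in the family over which the graph conductance is maximized; hence $\Phi(\o{P}_{\s{V}}) \leq \Phi_{\pilim}$. Lemma~\ref{lem:lift-mixing} then gives $\overline{\tau}(1/4) = \tau(1/4) \geq 1/(4\Phi(\o{P}_{\s{V}})) \geq 1/(4\Phi_{\pilim})$. The QW statement follows at once, since Lemma~\ref{lem:local} shows that a $\pilim$-invariant QW induces precisely such a local, invariant stochastic linear map.

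For the upper bound I would exhibit an explicit near-optimal process. By definition of $\Phi_{\pilim}$ there is a local, $\pilim$-stationary Markov chain $\o{P}'$ on $\s{V}$ attaining $\Phi(\o{P}') = \Phi_{\pilim}$. I would then import the lifting construction of \cite{chen1999lifting}, which lifts such a chain to an LMC—respecting the graph locality and converging on $\s{V}$ to $\pilim$—with a mixing time of order $\log(1/\min_k \pilim_k)/\Phi_{\pilim}$ to reach a constant error. Applying the amplification Lemma~\ref{lem:amp} upgrades this to $\tau(\epsilon) \leq O(\log(1/\min_k \pilim_k)\,\log(1/\epsilon)/\Phi_{\pilim})$ for all $\epsilon>0$. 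Finally, since any LMC is the special case of a QW whose Kraus operators $\{\o{M_k}\}$ take the population-only form recorded in the main text, this fast-mixing LMC is itself a $\pilim$-invariant QW; this establishes existence and matches the lower bound up to the logarithmic factors.

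The main obstacle I expect concerns the hypotheses of Lemma~\ref{lem:lift-mixing} in the lower bound: that lemma requires an \emph{irreducible} LMC with a genuine stationary distribution, whereas the amplified chain of Theorem~\ref{thm:main} has an intricate recurrence structure, since the depth register is reset to the current node every $T$ steps and some lifted states are only transient. I would therefore restrict attention to the recurrent component reachable from the initializations $\o{F}[\p]$, verify that the marginal on $\s{V}$ there still converges to $\pilim$, and confirm that the induced chain $\o{P}_{\s{V}}$ is well defined and $\pilim$-stationary on it, so that $\Phi(\o{P}_{\s{V}}) \leq \Phi_{\pilim}$ applies. A secondary technical point is checking that the \cite{chen1999lifting} lift genuinely respects the original graph's locality for a \emph{general} target $\pilim$ rather than only the uniform distribution, which is exactly what makes the exhibited QW admissible in the sense of Lemma~\ref{lem:local}.
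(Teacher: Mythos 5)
Your proposal is correct and follows essentially the same route as the paper: the lower bound is obtained by combining Theorem~\ref{thm:main} (with $\epsilon_0\leq 1/4$) with Lemma~\ref{lem:lift-mixing} and the reduction of QWs to local invariant stochastic maps, and the upper bound is imported from the lifting construction of \cite{chen1999lifting} together with the observation that LMCs are a special class of QWs. Your flagged concerns (irreducibility of the amplified lift for Lemma~\ref{lem:lift-mixing}, and locality of the \cite{chen1999lifting} lift for general $\pilim$) are legitimate points that the paper's own two-line proof passes over silently, and restricting to the recurrent component reachable from $\o{F}[\p]$ is the right way to handle the first.
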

 \begin{proof}
   If the stochastic linear map has a $1/4$-mixing time $\tau(1/4)$, then according to Theorem \ref{thm:main} with $\epsilon_0 \leq 1/4$ we can construct a local LMC with a marginal $1/4$-mixing time equal to $\tau(1/4)$. However, from Lemma \ref{lem:lift-mixing} we can bound the $1/4$-mixing time of any such LMC by the graph conductance $\Phi_{\pilim}$.
At the beginning of the second section we explain how we can associate a stochastic linear map that is local and invariant, to any local quantum channel that leaves the same target distribution invariant. This readily implies that the lower bound on mixing time holds for such QWs. The existence result follows from the same existence result for lifted Markov chains, proven in \cite{chen1999lifting}. Its validity for QWs follows by recognizing that lifted Markov chains are a special class of quantum channels.
 \end{proof}

\subsection{Quantum Walks on lattices}

Consider a $d$-dimensional periodic lattice $\integer_M^d$ of side $M$, encoded in a graph with node set
\[
   \s{V}
      = \{(i_1,i_2,\dots,i_d)|1 \leq i_k \leq M,\forall k\}.
\]
Similar to the QW/LMC construction for the cycle, we lift this graph by adding a set of coin states $\s{C} = \{+_k,-_k|1\leq k\leq d\}$. An LMC on this graph thus takes place on the vector space $\Hi_{\s{C}} \otimes \Hi_{\s{V}} = \sp\{\cl{c,v}|(c,v)\in\s{C}\times\s{V}\}$.
With operator $\o{P}_k^\pm$ defined on $\Hi_{\s{V}}$ as the cyclic permutation of the $k$-th dimension, that is, $\o{P}^\pm_k \cl{\dots,i_{k-1},i_k,i_{k+1}\dots} = \cl{\dots,i_{k-1},(i_k\pm 1)\mathrm{mod }M,i_{k+1},\dots}$ for all $1\leq k\leq d$, the LMC defined in \cite{diaconis2013spectral} writes: 
\begin{align} \label{eq:LMC-torus}
\begin{split}
    \o{P}
      &= \left( \sum_k \cl{+_k}\clt{+_k} \otimes \o{P}^+_k + \cl{-_k}\clt{-_k} \otimes \o{P}^-_k \right)
              \cdot (\o{S} \otimes \id_{\s{V}}) \\
      &= \begin{bmatrix} \o{P}^+_1 \\ & \o{P}^-_1 \\ 
                                      & & \o{P}^+_2 \\ & & & \ddots \\ & & & & \o{P}^-_d \end{bmatrix}
            \cdot (\o{S} \otimes \id_{\s{V}}),
 \end{split}
 \end{align}
where we now specifically select
\[
    \o{S}
      = \begin{bmatrix} 1-(2d-1)\alpha & \alpha & \dots & \alpha \\ \alpha & 1-(2d-1)\alpha & \dots & \alpha \\
                \vdots & \vdots & \ddots & \vdots \\ \alpha & \alpha & \dots & 1-(2d-1)\alpha \end{bmatrix},
\]
with $\alpha = 1/(2dM)$, and we recall that $\id_{\s{V}}$ is the identity matrix on $\Hi_{\s{V}}$.
 
We will prove the mixing time for $M$ odd. For $M$ even, the LMC shows a parity problem: starting from a single state, at any given time the walk will be supported only on the even or only on the odd nodes. This is easily remedied by for instance modifying $\o{P}$ to $(\o{P}+\id_{\s{C}\times\s{V}})/2$, which changes the mixing time only by a constant factor, or by randomizing the parity of the initial state. To facilitate its reading, we again structure the proof using two technical lemmas.

\begin{lem}\label{aslem2}
Assume that the LMC in Eq.~\eqref{eq:LMC-torus} starts with any $\p_0 \in \{\cl{(c,v)}\}$, i.e., with all its weight concentrated on a single vertex and single coin choice $c \in \{k_+,k_-\}$ for some $k$. Then the resulting distribution after $2M$ time steps has uniformly mixed $i_k$ with a probability $\geq 1/(16d)$, in the sense that $\pg{i_k = n}{\p_{2M}} \geq 1/(16d\,M)$ for all $n \in \{1,2,...,M\}$.
\end{lem}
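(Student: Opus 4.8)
The plan is to lower-bound $\pg{i_k=n}{\p_{2M}}$ by restricting attention to a single, explicitly described ``good event'' $G$ on the coin trajectory, conditioned on which the $k$-th coordinate is \emph{exactly} uniform. By the $+_k\leftrightarrow -_k$ symmetry I may assume the initial coin is $c=+_k$. I would take $G$ to be the event that, over the $2M$ applications of the coin toss $\o{S}$, the coin switches \emph{exactly once}, and that single switch is $+_k\to -_k$. Writing the switch as occurring at step $s+1$ (so the coin is $+_k$ for steps $1,\dots,s$ and $-_k$ afterwards, never leaving the $k$-axis so that no other coordinate moves), the ``toss-then-shift'' rule sends $i_k$ up by $1$ for $s$ steps and down by $1$ for the remaining $2M-s$ steps, landing at $i_k \equiv i_0+2s-2M \equiv i_0+2s \pmod M$, using $2M\equiv 0$.

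The key observation is that every single-switch configuration carries the \emph{same} probability $\big(1-(2d-1)\alpha\big)^{2M-1}\,\alpha$: there are always $2M-1$ ``stay'' tosses and one $+_k\to -_k$ switch, regardless of the switch time. Hence, conditioned on $G$, the switch time $s$ is uniform over the $2M$ values $\{0,1,\dots,2M-1\}$. Since $M$ is odd, $\gcd(2,M)=1$, so $s\mapsto(i_0+2s)\bmod M$ is a bijection on each block of $M$ consecutive values of $s$, and over the full range of $2M$ values it hits each residue exactly twice; therefore $\mathbb{P}(i_k=n\mid G)=2/(2M)=1/M$ for every $n$. Summing the equiprobable configurations gives $\mathbb{P}(G)=2M\,\alpha\,\big(1-(2d-1)\alpha\big)^{2M-1}=\tfrac1d\big(1-(2d-1)\alpha\big)^{2M-1}$ via $\alpha=1/(2dM)$, whence $\pg{i_k=n}{\p_{2M}} \ge \mathbb{P}(G)\,\mathbb{P}(i_k=n\mid G)=\tfrac{1}{dM}\big(1-(2d-1)\alpha\big)^{2M-1}$.

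It remains to show $\big(1-(2d-1)\alpha\big)^{2M-1}\ge 1/16$, which I expect to be the only genuinely delicate step; the real conceptual content lies in designing an event that is simultaneously reasonably likely \emph{and} exactly uniform, after which everything is elementary. Since $(2d-1)\alpha=(2d-1)/(2dM)\le 1/M$, I would estimate $\big(1-(2d-1)\alpha\big)^{2M-1}\ge (1-1/M)^{2M}=\big[(1-1/M)^M\big]^2$ and use that $(1-1/M)^M$ is increasing in $M$ with value $8/27$ at $M=3$, giving $\ge (8/27)^2=64/729>1/16$ for all odd $M\ge 3$. Combining yields $\pg{i_k=n}{\p_{2M}}\ge 1/(16dM)$, which is exactly the claimed bound (with $\mathbb{P}(G)\ge 1/(16d)$ as the advertised ``probability $\ge 1/(16d)$'' of uniform mixing). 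Beyond the constant, the two points I would double-check are the correct accounting of the toss-then-shift order at the switch step, and the remark that restricting to $G$ — rather than tracking all trajectories reaching $n$ — only strengthens the one-sided bound we need.
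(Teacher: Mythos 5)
Your proof is correct and follows essentially the same route as the paper's: condition on the event of exactly one coin switch $+_k\to -_k$, observe that the switch time is then uniform over $2M$ equiprobable slots so that $i_k$ lands uniformly on $\integer_M$ (using $M$ odd), and bound the event's probability below by $1/(16d)$ via the monotonicity of $(1-1/M)^M$. The only cosmetic difference is that you work directly with the entries of $\o{S}$, whereas the paper first decomposes $\o{S}=(1-2d\alpha)\id_{\s{C}}+2d\alpha\,\mathbf{1}/(2d)$ and defines the ``coin toss'' as the full-randomization event; both yield the same $1/(16dM)$ bound.
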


\begin{proof}
By symmetry, we can consider without loss of generality that the initial distribution is
    \[ \p_0 = \cl{+_1,1,\dots,1}. \]
Writing $\o{S} = (1-2d\alpha)\o{I}_{\s{C}} + 2d\alpha \mathbf{1}/{2d}$, where $\mathbf{1}$ denotes the matrix of all ones, we say that at each time step with probability $2d\alpha=1/M$ a coin toss takes place. Then the probability of a single coin toss happening over $2M$ steps is given by
    \begin{align*}
      \pg{1 \text{ coin toss}}{2M \text{ steps}}
        &= \binom{2M}{1} \cdot \frac{1}{M} \cdot \left(1-\frac{1}{M}\right)^{2M-1} \\
        &= 2\, \left(1-\frac{1}{M}\right)^{2M-1} \geq \frac{1}{8}
        \quad \text{for} \;\; M \geq 2,
     \end{align*}
where the inequality follows from the fact that $(1-1/M)^M$ is an increasing function of $M$, going from $1/4$ for $M = 2$ to $1/e$ for $M$ large.
From this, the event $E_1$ that a single coin toss takes place and switches the coin state from $+_1$ to $-_1$, occurs with probability
    \[
      \pg{E_1}{2M \text{ steps}}
          = \frac{1}{2d} \cdot 
                  \pg{1 \text{ coin toss}}{2M \text{ steps}} \geq \frac{1}{16d}.
    \]
When $E_1$ holds true with the single coin toss at time $T \in [1,2M]$, the distribution at time $2M$ equals
  \[
      \p_{2M} 
          = \cl{-_1,\;1+(T-1)-(2M-T+1)\text{mod}M\;,1,\dots,1} 
          = \cl{-_1,\;(2T-1)\text{mod}M\;,1,\dots,1} \; .
  \]
Yet, conditional on the fact that $E_1$ holds true, the timing $T$ of the single coin toss is uniformly distributed between 1 and $2M$. As a consequence, $2T-1$ is uniformly distributed over $1,3,5,...,M,2,4,6,...,M-1,1,3,5,...,M,2,4,6,...,M-1$, i.e., effectively over the integers from $1$ to $M$.
Thus
  \[
      \pg{i_k = n}{\p_{2M}}
          \geq \pg{i_k = n}{E_1} \cdot \pg{E_1}{2M \text{ steps}}
          = \frac{1}{16d} \, \frac{1}{M}\quad \forall n \in \{1,2,...,M\},
  \]
which proves the statement.
\end{proof}

Our application of Lemma \ref{aslem2} to prove the following result is loose. Its sequential use, coordinate by coordinate, leaves further room for improvement, and this is why we think that it should be possible to win a factor $d$ on the mixing time. However, the resulting estimate is sufficient for the story of the main paper.

\begin{lem}\label{aslem1}
Consider the LMC defined in Eq.\eqref{eq:LMC-torus} on $\integer_M^d$, with $M$ odd. For any initial distribution $p_0$, the distribution $p_T$ after $T = 3M\cdot d(d\log(d)+d)$ steps satisfies
    \begin{equation} \label{eq:contraction} 
      \p_T = \o{P}^T\p_0 = q \cdot \pilim + (1-q) \cdot \left(\frac{\p_T-q\cdot\pilim}{1-q} \right),
    \end{equation}
with $\tilde{\p}_T=(\p_T-q\cdot\pilim)\,/\,(1-q)$ a \emph{positive} distribution; $\pilim$ the stationary distribution of $\o{P}$, which is the uniform distribution over $\s{C} \times \s{V}$; and $q =(1-1/e)/2$ where $e=\exp(1)$.
\end{lem}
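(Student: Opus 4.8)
The plan is to prove a Doeblin-type minorization: I will show that from \emph{any} initial configuration the law after $T$ steps puts weight at least $q/(2dM^d)$ on every $(c,v)\in\s{C}\times\s{V}$. Since $\pilim$ is uniform and $|\s{C}\times\s{V}|=2dM^d$, this is exactly the asserted decomposition $p_T=q\,\pilim+(1-q)\tilde p_T$ with $\tilde p_T\ge 0$. Three structural facts drive everything. First, under $\o{P}$ the coin performs an \emph{autonomous} Markov chain governed by $\o{S}$, resampling uniformly over its $2d$ values with probability $2d\alpha=1/M$ at each step, independently of the position. Second, the coordinate $i_k$ moves only while the coin sits on axis $k$, and a cyclic shift of an already-uniform coordinate leaves it uniform: \emph{uniformity of a coordinate, once achieved, is never lost.} Hence it suffices to make all $d$ coordinates uniform at least once, and the coin uniform, on an event of probability at least $q$.

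I would cut the run into consecutive blocks of length $\Theta(M)$, so that there are $\Theta(d(d\log d+d))$ of them within $T=3Md(d\log d+d)$. Lemma \ref{aslem2} is the engine inside a block: if the coin enters a block on axis $k$, then with probability at least $1/(16d)$ a single coin toss fires and uniformizes $i_k$. Because the coin tosses are i.i.d.\ across steps, the occurrence of this clean single-toss event inside a block is \emph{independent of the entire past once the coin's axis at the block boundary is fixed}; this conditional independence is what lets me treat the blocks as (biased) coupon-collector trials despite their correlations.

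The core estimate is then a coupon collector over the $d$ axes. Conditioning on the axis occupied at each block boundary, each block is a trial that uniformizes the currently active coordinate with probability at least $1/(16d)$; since the autonomous coin chain spreads over its $2d$ values, every axis is active in a constant fraction of blocks, so each of the $d$ coordinates receives $\Omega(d\log d+d)$ mixing chances. A union bound over the $d$ coordinates, together with chasing the (admittedly loose, as the authors note) constants, shows that with probability at least $1/2$ all coordinates are uniform by time $T$. For the coin I would reserve the final block: once the position is uniform, conditioning on the coin value entering this block, the Markov property and the coin's autonomy make the remaining coin trajectory independent of the already-uniform position; a resampling toss then occurs with probability at least $1-1/e$ and leaves the coin uniform and independent of $X_T$. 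Multiplying these two (conditionally independent) contributions yields a good event of probability at least $(1-1/e)/2=q$, under which $(c_T,X_T)$ is jointly uniform on $\s{C}\times\s{V}$, giving $p_T\ge q\,\pilim$.

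The main obstacle is precisely the dependence between blocks: the axis a block acts on is random and correlated with earlier mixing outcomes, so the trials are not literally independent and the coupon-collector bound must be run on the \emph{conditional} success probabilities supplied by Lemma \ref{aslem2} rather than on unconditional ones. The second delicate point is upgrading control of the position marginal to the joint law over $\s{C}\times\s{V}$ demanded by $\pilim$; this forces the separate coin-randomization step above and the careful use of the coin's autonomy to decouple the final coin from the uniform position. By contrast, verifying that $\pilim$ is uniform, the per-block conditional independence, and the final arithmetic assembling $q=(1-1/e)/2$ are routine, and the constant factors can be loosened freely since only the scaling $T=\Theta(Md^2\log d)$ matters for Theorem 3.
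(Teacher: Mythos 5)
Your proposal follows essentially the same route as the paper's proof: blocks of length $\Theta(M)$, Lemma~\ref{aslem2} as the per-block mixing engine, and a coupon-collector argument over the $d$ axes yielding the minorization $p_T \geq q\,\pilim$ with $q=(1-1/e)/2$. The only substantive difference is that you explicitly reserve a final coin-randomization step so that the \emph{joint} law on $\s{C}\times\s{V}$ (not just the position marginal) dominates $q\,\pilim$ — a point the paper's proof leaves implicit — so your version is, if anything, slightly more careful on that detail.
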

\begin{proof}
We consider time intervals of $3M$ steps, which we analyze as follows:
    \begin{itemize}
     \item in the first $M$ steps: As in Lemma \ref{aslem2}, we say that at each time step with probability $2d\alpha=1/M$ a coin toss takes place. Now, we use that this completely randomizes the coin state. The probability that no such coin toss has happened after $M$ steps is $(1-1/M)^M \leq 1/e$.
     \item in the next $2M$ steps: By lemma \ref{aslem2} the coordinate corresponding to the randomized coin state is mixed uniformly with a probability $\geq 1/16d$. If that coordinate was already in a more mixed state than in the hypothesis of lemma \ref{aslem2}, then the resulting mixing can only be better.
    \end{itemize}
Each interval of $3M$ steps will thus uniformly mix a random coordinate with probability $r \geq 1/(16e\,d)$.
  Using Cantelli's inequality for a binomial process with success probability $r$, we find the following bound for the number of successful mixing episodes $k$:
    \[ \pg{k \geq l}{\;2l/r \text{ iterations}} \geq 1/2. \]    
So if we go through $(d\log(d)+d) \cdot 32e\, d$ such intervals of $3M$ steps, then with a probability $1/2$ we will have mixed $d\log(d)+d$ coordinates; the latter are chosen randomly according to independent uniform processes with repetition.
According to the coupon collector's problem, $d\log(d)+d$ random choices selects all coordinates with a probability $(1-1/e)$. This implies that we can bound the state after $T = 3M \cdot (d\log(d)+d) \cdot 32 e\, d$ steps as $\; \p_T \geq q \cdot \pi \;$, with $q = (1-1/e)/2$.
\end{proof}\vspace{3mm}

We now have all the pieces to prove the actual result.

\begin{thm}[main paper Thm.3]
  The LMC defined in Eq.\eqref{eq:LMC-torus} on $\integer_M^d$, with $M$ odd, has a mixing time $\tau(\epsilon) \leq\, O(M\; d^2\log(d) \, \log(1/\epsilon))$.  \;
(We recall that the assumption that $M$ is odd is a standard technicality, to avoid discussing all possible easy ways to break the parity symmetry.)
\end{thm}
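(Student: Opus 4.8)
The plan is to treat Lemma~\ref{aslem1} as a one-block Doeblin minorization and amplify it into geometric ergodicity. Lemma~\ref{aslem1} gives that, for $T = 3Md(d\log(d)+d) = O(M\,d^2\log d)$ and \emph{any} initial distribution $\p_0$ over $\s{C}\times\s{V}$, the iterate obeys $\o{P}^T\p_0 \geq q\,\pilim$ componentwise, with the absolute constant $q=(1-1/e)/2$ and $\pilim$ the uniform (stationary) distribution on $\s{C}\times\s{V}$. Equivalently $\o{P}^T\p_0 = q\,\pilim + (1-q)\nu$ for some probability vector $\nu$, which immediately yields the one-block contraction $\|\o{P}^T\p_0-\pilim\|_{TV}=(1-q)\|\nu-\pilim\|_{TV}\leq 1-q$. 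First I would record this bound, valid uniformly over all starting distributions.

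Next I would iterate the block, using the invariance $\o{P}\pilim=\pilim$. Applying the minorization repeatedly and peeling off the invariant part, one shows by induction that $\o{P}^{kT}\p_0=\bigl(1-(1-q)^k\bigr)\pilim+(1-q)^k\nu^{(k)}$ for some probability vector $\nu^{(k)}$: indeed $\o{P}^{(k+1)T}\p_0=\o{P}^T\bigl((1-(1-q)^k)\pilim+(1-q)^k\nu^{(k)}\bigr)=(1-(1-q)^k)\pilim+(1-q)^k\,\o{P}^T\nu^{(k)}$, and a further application of the minorization to $\o{P}^T\nu^{(k)}$ advances the induction. Hence $\|\o{P}^{kT}\p_0-\pilim\|_{TV}=(1-q)^k\|\nu^{(k)}-\pilim\|_{TV}\leq (1-q)^k$. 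Choosing $k=\lceil\log(1/\epsilon)/\log(1/(1-q))\rceil$ forces the right-hand side below $\epsilon$; since $q$ is constant, $\log(1/(1-q))$ is constant and $k=O(\log(1/\epsilon))$.

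Finally I would assemble the bound and pass to the marginal. The joint chain reaches total-variation distance $\epsilon$ from $\pilim$ after $kT=O(\log(1/\epsilon))\cdot O(M\,d^2\log d)=O(M\,d^2\log(d)\log(1/\epsilon))$ steps, from any joint initial distribution and in particular from every $\o{F}[\p]$. Because marginalization over $\s{C}$ is a stochastic map, it is contractive in total variation, so the marginal over $\s{V}$ is at least as close to the uniform distribution on $\s{V}$ as the joint distribution is to $\pilim$; this is exactly the quantity controlled by the mixing-time definition, giving $\tau(\epsilon)\leq O(M\,d^2\log(d)\log(1/\epsilon))$ as claimed. The odd-$M$ hypothesis is used only inside Lemma~\ref{aslem1} to avoid the parity obstruction, so nothing further is needed here.

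The main obstacle has in fact been front-loaded into Lemma~\ref{aslem1}, whose proof combines the coin-randomization estimate of Lemma~\ref{aslem2} with Cantelli's inequality and the coupon-collector bound; granting it, the present theorem is a routine amplification. The one point requiring care is that the generic amplification result (Lemma~\ref{lem:amp}) cannot be quoted verbatim, since it assumes $\epsilon_0<1/2$ whereas the one-block distance here is $1-q\approx 0.68>1/2$. The fix is precisely to use the sharper \emph{minorization} $\o{P}^T\p_0\geq q\,\pilim$ supplied by Lemma~\ref{aslem1} rather than a mere distance bound: this iterates cleanly without the lossy factor of $2$ that forces the $\epsilon_0<1/2$ restriction in Lemma~\ref{lem:amp}.
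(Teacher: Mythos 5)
Your proposal is correct and follows essentially the same route as the paper: it treats Lemma~\ref{aslem1} as a Doeblin-type minorization $\o{P}^T\p_0 \geq q\,\pilim$ and iterates it to obtain $\|\o{P}^{kT}\p_0-\pilim\|_{TV}\leq(1-q)^k$, exactly as in the paper's induction $\p_{kT}=(1-(1-q)^k)\pilim+(1-q)^k\tilde{\p}_{kT}$. Your two added remarks --- that marginalization over $\s{C}$ only contracts total variation, and that the generic amplification lemma cannot be quoted since $1-q>1/2$ --- are accurate and make explicit points the paper leaves implicit, but they do not change the argument.
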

\begin{proof}
  We have shown in Lemma \ref{aslem1} that for any initial distribution $\p_0$ over $\s{V}$, with a fixed probability $q$ the state will be uniformly mixed after $T = 3M\cdot d(d\log(d)+d)$ steps, i.e., $\p_T$ will be of the form \eqref{eq:contraction}.
Then after $2T$ steps, we get
    \[ 
        \p_{2T} 
            = \o{P}^T\p_T 
            = \o{P}^T(q\cdot \pilim + (1-q)\cdot \tilde{\p}_T) 
            = q\cdot \pilim + q(1-q) \cdot \pilim + (1-q)^2 \cdot \tilde{\p}_{2T}.
    \]
And after another $(k-2)T$ steps we find by an iterative argument that
    \[ \p_{kT} = (1-(1-q)^k) \cdot \pilim + (1-q)^k \cdot \tilde{\p}_{kT}. \]
This shows that
    \[
        \| \o{P}^t\p_0 - \pilim \|_{TV} 
              \leq (1-q)^{\lfloor t/T \rfloor} \quad \forall t \geq 0,\p_0,
    \]
and thus $\;\| \o{P}^t\p_0 - \pilim \|_{TV} \leq \epsilon\;$ provided $\; t \geq T\cdot \left(1 + \frac{\log{\epsilon^{-1}}}{\log{(1-q)^{-1}}} \right) \;$.
As $q$ is a fixed constant below 1, and $T\in O(M\cdot d^2\log(d))$, this proves the claimed mixing time.
\end{proof}

\end{document}